\newcommand{\TAB}{\hspace{5mm}} 
\newcommand{\term}[1]{\ensuremath{\mathtt{#1}}\xspace}
\newcommand{\thread}{\term{thread}}
\newcommand{\ALG}{\term{ALG}}
\newcommand{\DT}{\mD} 
\newcommand{\prior}{\mP} 
\def\DKL{\textbf{D}_{\term{KL}}}
\def\E{\mathop{\mathbb{E}}}
\def\I{\mathcal{I}}
\def\A{\mathcal{A}}
\def\M{\mathcal{M}}
\def\S{\mathcal{S}}
\def\X{\mathcal{X}}
\def\EX{\term{EX}}
\def\EE{\mathcal{E}}
\def\varTheta{\bold{\Theta}}
\def\varOmega{\bold{\Omega}}
\def\ell{l}
\newcommand{\AExp}{\mA} 
\newcommand{\MExp}{\mM}  
\def\OPT{\term{OPT}}
\newcommand{\OPTpub}{\OPT_{\term{pub}}}
\newcommand{\OPTpri}{\OPT_{\term{pri}}}
\newcommand{\pRes}{p^{\term{res}}}
\begin{document}

\title{Bayesian Exploration with Heterogeneous Agents}

\author{Nicole Immorlica}
\affiliation{\institution{Microsoft Research}}
\email{nicimm@microsoft.com}

\author{Jieming Mao}
\affiliation{\institution{University of Pennsylvania}}
\email{maojm517@gmail.com}

\author{Aleksandrs Slivkins}
\affiliation{\institution{Microsoft Research}}
\email{slivkins@microsoft.com}

\author{Zhiwei Steven Wu}
\affiliation{\institution{University of Minnesota}}
\email{zsw@umn.edu}

\begin{abstract}
It is common in recommendation systems that users both consume and produce information as they make strategic choices under uncertainty. While a social planner would balance ``exploration'' and ``exploitation'' using a multi-armed bandit algorithm, users' incentives may tilt this balance in favor of exploitation. We consider Bayesian Exploration: a simple model in which the recommendation system (the ``principal'') controls the information flow to the users (the ``agents'') and strives to incentivize exploration via information asymmetry. A single round of this model is a version of a well-known ``Bayesian Persuasion game'' from \cite{Kamenica-aer11}. We allow heterogeneous users, relaxing a major assumption from prior work that users have the same preferences from one time step to another. The goal is now to learn the best \emph{personalized} recommendations. One particular challenge is that it may be impossible to incentivize some of the user types to take some of the actions, no matter what the principal does or how much time she has. We consider several versions of the model, depending on whether and when the user types are reported to the principal, and design a near-optimal ``recommendation policy'' for each version. We also investigate how the model choice and the diversity of user types impact the set of actions that can possibly be ``explored'' by each type.

\end{abstract}
%
\keywords{bayesian exploration, incentivizing exploration, heterogeneous agents}

\maketitle

\section{Introduction}
Recommendation systems are ubiquitous in online markets (\eg Netflix for movies, Amazon for products, Yelp for restaurants, etc.),
high-quality recommendations being a crucial part of their value proposition.
A typical recommendation system encourages its users to submit feedback on their experiences, and aggregates this feedback in order to provide better recommendations in the future. Each user plays a dual rule: she consumes information from the previous users (indirectly, via recommendations),
and produces new information (\eg  a review) that benefits future users. This dual role creates a tension between exploration, exploitation, and users' incentives.

A social planner -- a hypothetical entity that controls users for the sake of common good -- would balance ``exploration'' of insufficiently known alternatives and ``exploitation'' of the information acquired so far. Designing algorithms to trade off these two objectives is a well-researched subject in machine learning and operations research.
%
However, a given user who decides to ``explore'' typically suffers all the downside of this decision, whereas the upside (improved recommendations) is spread over many users in the future. Therefore, users' incentives are skewed in favor of exploitation. As a result, observations may be collected at a slower rate, and suffer from selection bias (\eg  ratings of a particular movie may mostly come from people who like this type of movies). Moreover, in some natural but idealized examples (\eg  \cite{Kremer-JPE14,ICexploration-ec15}) optimal recommendations are never found because they are never explored.

Thus, we have a problem of \emph{incentivizing exploration}.
Providing monetary incentives can be financially or technologically unfeasible, and relying on voluntary exploration can lead to selection biases. A recent line of work, started by \cite{Kremer-JPE14}, relies on the inherent \emph{information asymmetry} between the recommendation system and a user. These papers posit a simple model, termed \emph{Bayesian Exploration} in \cite{ICexplorationGames-ec16}. The recommendation system is a ``principal'' that interacts with a stream of self-interested ``agents'' arriving one by one. Each agent needs to make a decision: take an action from a given set of alternatives. The principal issues a recommendation,
and observes the outcome, but cannot direct the agent to take a particular action. The problem is to design a ``recommendation policy'' for the principal that learns over time to make good recommendations \emph{and} ensures that the agents are incentivized to follow this recommendation.
A single round of this model is a version of a well-known ``Bayesian Persuasion game'' \cite{Kamenica-aer11}.

\xhdr{Our scope.}
We study Bayesian Exploration with agents that can have heterogenous preferences. The preferences of an agent are encapsulated in her {\em type}, \eg vegan vs meat-lover.
When an agent takes a particular action, the outcome depends on the action itself (\eg the selection of restaurant), the ``state'' of the world (\eg the qualities of the restaurants), and the type of the agent. The state is persistent (does not change over time), but initially not known; a Bayesian prior on the state is common knowledge. In each round, the agent type is drawn independently from a fixed and known distribution. The principal strives to learn the best possible recommendation for each agent type.

We consider three models, depending on whether and when the agent type is revealed to the principal: the type is revealed immediately after the agent arrives (\emph{public types}), the type is revealed only after the principal issues a recommendation (\emph{reported types}),\footnote{Reported types may arise if
the principal asks agents to report the type after the recommendation is issued, \eg in a survey. While the agents are allowed to misreport their respective types, they have no incentives to do that.}
and the type is never revealed (\emph{private types}).
We design a near-optimal recommendation policy for each modeling choice.
\OMIT{In fact, we consider a stronger benchmark: optimal Bayesian-expected reward achieved by any recommendation policy in any one round.}

\xhdr{Explorability.}
A distinctive feature of Bayesian Exploration is that it may be impossible to incentivize some agent types to take some actions, no matter what the principal does or how much time she has. For a more precise terminology, a given type-action pair is \emph{explorable} if this agent type takes this action under some recommendation policy in some round with positive probability. This action is also called \emph{explorable} for this type. Thus: some type-action pairs might not be explorable. Moreover, one may need to explore to find out which pairs are explorable.  The set of explorable pairs is interesting in its own right as they bound the welfare of a setting. Recommendation policies cannot do better than the ``best explorable action'' for a particular agent type: an explorable action with a largest reward in the realized state.

\xhdr{Comparative statics for explorability.}
We study how the set of all explorable type-action pairs (\emph{explorable set}) is affected by the model choice and the diversity of types. First, we find that for each problem instance the explorable set stays the same if we transition from public types to reported types, and can only become smaller if we transition from reported types to private types.
We provide a concrete example when the latter transition makes a huge difference. Second, we vary the distribution $\DT$ of agent types. For public types (and therefore also for reported types), we find that the explorable set is determined by the support set of $\DT$. Further, if we make the support set larger, then the explorable set can only become larger. In other words, \emph{diversity of agent types helps exploration}. We provide a concrete example when the explorable set increases very substantially even if the support set increases by a single type. However, for private types the picture is quite different: we provide an example when \emph{diversity hurts}, in the same sense as above. Intuitively, with private types, diversity muddles the information available to the principal making it harder to learn about the state of the world, whereas for public types diversity helps the principal refine her belief about the state.

\xhdr{Our techniques.}
As a warm-up, we first develop a recommendation policy for public types. In the long run, our policy
matches the benchmark of ``best explorable action''.
While it is easy to prove that such a policy exists, the challenge is to provide it as an explicit procedure.
Our policy focuses on exploring all explorable type-action pairs. Exploration needs to proceed gradually, whereby exploring one action may enable the policy to explore another. In fact, exploring some action for one type may enable the policy to explore some action for another type. Our policy proceeds in phases: in each phase, we explore all actions for each type that can be explored using information available at the start of the phase.  Agents of different types learn separately, in per-type ``threads"; the threads exchange information after each phase.

An important building block is the analysis of the single-round game. We use information theory to characterize how much state-relevant information the principal has. In particular, we prove a version of \emph{information-monotonicity}: the set of all explorable type-action pairs can only increase if the principal has more information.

As our main contribution, we develop a policy for private types. In this model, recommending one particular action to the current agent is not very meaningful because the agents' type is not known to the principal. Instead, one can recommend a \emph{menu}: a mapping from agent types to actions. Analogous to the case of public types, we focus on explorable menus and gradually explore all such menus, eventually matching the Bayesian-expected reward of the best explorable menu.
\OMIT{Without loss of generality, we restrict to  Bayesian-incentive compatible (BIC) policies: essentially, policies that output menus such that the agents are incentivized to follow them. The issue of explorability is now about menus: a menu is called \emph{explorable} if some BIC policy recommends this menu in some round with a positive probability. As some menus might not be explorable, we are interested in the ``best explorable menu'': an explorable menu with a largest expected reward for the realized state of the world. Our recommendation policy for private types competes with this benchmark, eventually matching its Bayesian-expected reward. Our policy focuses on exploring all explorable menus, and proceeds gradually, whereby exploring one menu may enable the policy to explore another.}
One difficulty is that exploring a given menu does not immediately reveal the reward of a particular type-action pair (because multiple types could map to the same action). Consequently, even keeping track of what the policy knows is now non-trivial. The analysis of the single-round game becomes more involved, as one needs to argue about ``approximate information-monotonicity''. To handle these issues, our recommendation policy satisfies only a relaxed version of incentive-compatibility.

In the reported types model, we face a similar issue, but achieve a much stronger result: we design a policy which matches our public-types benchmark in the long run. This may seem counterintuitive because ``reported types'' are completely useless to the principal in the single-round game (whereas public types are very useful). Essentially, we reduce the problem to the public types case, at the cost of a much longer exploration.

\xhdr{Discussion.}
This paper, as well as all prior work on incentivizing exploration, relies on very standard yet idealized assumptions of Bayesian rationality and the ``power to commit'' (i.e., principal can announce a policy and commit to implementing it). A recent paper \cite{Jieming-unbiased18} attempts to mitigate these assumptions (in a setting with homogeneous agents). However, some form of the ``power to commit" assumption appears necessary to make any progress.

We do not attempt to \emph{elicit} agents' types when they are not public, in the sense that our recommendation to a given agent is not contingent on anything that this agent reports. However, our result for reported types is already the best possible, in the sense that the explorable set is the same as for public types, so (in the same sense) elicitation is not needed.

\xhdr{Related work.}
Bayesian Exploration with homogenous agents was introduced in \cite{Kremer-JPE14}, and largely resolved: for optimal policy in the case of two actions and deterministic utilities \cite{Kremer-JPE14}, for explorability \cite{ICexplorationGames-ec16}, and for regret minimization and stochastic utilities \cite{ICexploration-ec15}.

Bayesian Exploration with heterogenous agents and public types is studied in \cite{ICexploration-ec15}, under a very strong assumption which ensures explorability of all type-action pairs, and in \cite{ICexplorationGames-ec16}, where a fixed tuple of agent types arrives in each round and plays a game. \cite{ICexplorationGames-ec16} focus on explorability of joint actions. Our approach for the public-type case is similar on a high level, but simpler and more efficient, essentially because we focus on type-action pairs rather than joint actions.

A very recent paper \cite{Kempe-colt18} (ours is independent work) studies incentivizing exploration with heterogenous agents and private types, but allows monetary transfers. Assuming that each action is preferred by some agent type, they design an algorithm with a (very) low regret, and conclude that \emph{diversity helps} in their setting.

Several papers study ``incentivizing exploration" in substantially different models:
with a social network \cite{Bahar-ec16};
with time-discounted utilities \cite{Bimpikis-exploration-ms17};
with monetary incentives \cite{Frazier-ec14,Kempe-colt18};
with a continuous information flow and a continuum of agents \cite{Che-13};
with long-lived agents and ``exploration" separate from payoff generation \cite{Bobby-Glen-ec16,Annie-ec18-traps,Liang-ec18};
with fairness \cite{KKMPRVW17}. Also, seminal papers \cite{Bolton-econometrica99,Keller-econometrica05} study scenarios with
long-lived, exploring agents and no principal.

Recommendation policies with no explicit exploration, and closely related ``greedy algorithm" in multi-armed bandits, have been studied recently \cite{bastani2017exploiting,Sven-aistats18,kannan2018smoothed,externalities-colt18}.
A common theme is that the greedy algorithm performs well under  substantial assumptions on the diversity of types. Yet, it suffers $\Omega(T)$ regret in the worst case.%
\footnote{This is a well-known folklore result in various settings; \eg see \cite{CompetingBandits-itcs18,Sven-aistats18}.}

\OMIT{\cite{Sven-aistats18} consider a ``full-revelation" recommendation system, and show that (under some substantial assumptions) agent heterogeneity leads to exploration.}

Exploration-exploitation tradeoff received much attention over the past decades, usually under the rubric of ``multi-armed bandits", see books \cite{CesaBL-book,Bubeck-survey12,Gittins-book11}. Absent incentives, Bayesian Exploration with public types is a well-studied problem of ``contextual bandits" (with deterministic rewards and a Bayesian prior). A single round of Bayesian Exploration is a version of the Bayesian Persuasion game \cite{Kamenica-aer11}, where the signal observed by the principal is distinct from the state. Exploration-exploitation problems with incentives issues arise in several other scenarios: dynamic pricing, \eg
    \cite{KleinbergL03,BZ09,BwK-focs13},
dynamic auctions \cite{DynAuctions-survey11},
advertising auctions
    \cite{MechMAB-ec09,DevanurK09,Transform-ec10-jacm},
human computation
    \cite{RepeatedPA-ec14,Ghosh-itcs13,Krause-www13},
and repeated actions, \eg
    \cite{Amin-auctions-nips13,Amin-auctions-nips14,Jieming-ec18}.

\section{Model and Preliminaries}
\label{sec:model}

\OMIT{\jmcomment{Reviewer 2: The model description is not clear. In particular, it did not describe what the principal knows about the states, agents' reward functions etc., and what the agents know. Note that it is crucial to be clear about who knows what since the information asymmetry is essential to the model. Is a new state drawn at every round t or is the state of nature the same across all rounds? I think for the model to make sense, it should be the latter case. Then, the questions are: (1) how does the principal's belief about the state of nature evolve? (2) in the benchmark $OPT_{pub}s$ rightly above Section 3, why you need to take expectation over w? Why not just the particular w realized at the beginning of the game? Then at each single round, what do principal and agents know about S?
}}

\emph{Bayesian Exploration} is a game between a principal and $T$ agents. The game consists of $T$ rounds. Each round $t\in [T]$ proceeds as follows: a new agent $t$ arrives, receives a message $m_t$ from the principal, chooses an action $a_t$ from a fixed action space $\A$, and collects a reward $r_t\in [0,1]$ that is immediately observed by the principal. Each agent $t$ has a {\em type} $\theta_t\in\varTheta$, drawn independently from a fixed distribution $\DT$, and an {\em action space} $\mA$ (same for all agents).  There is uncertainty, captured by a ``state of nature" $\omega\in \varOmega$, henceforth simply the \emph{state}, drawn from a Bayesian prior $\prior$ at the beginning of time and fixed across rounds. The {\em reward} $r_t = u(\theta_t,a_t,\omega)\in[0,1]$ of agent $t$ is determined by its type $\theta_t$, the action $a_t\in\mA$ chosen by this agent, and the state $\omega$, for some fixed and deterministic \emph{reward function}
$u:\varTheta\times \A \times \varOmega \to [0,1]$.
The principal's messages $m_t$ are generated according to a randomized online algorithm $\pi$ termed ``recommendation policy".
Thus, an {\em instance}
of Bayesian Exploration consists of the time horizon $T$, the sets $\mA,\varTheta,\varOmega$,
the type distribution $\DT$, the prior $\prior$, and the reward function $u$.

The knowledge structure is as follows. The type distribution $\DT$, the Bayesian prior $\prior$, the reward function $u$, and the recommendation policy are common knowledge. Each agent $t$ knows her own type $\theta_t$, and observes nothing else except the message $m_t$.  We consider three model variants, depending on whether and when the principal learns the agent's type: the type is revealed immediately after the agent arrives (\emph{public types}), the type is revealed only after the principal issues a recommendation (\emph{reported types}), the type is not revealed (\emph{private types}).



Let $H_t$ denote the \emph{history} observed by the principal at round $t$, immediately before it chooses its message $m_t$. Hence, it equals $\{(r_1,\theta_1),\ldots,(r_{t-1},\theta_{t-1}),\theta_t\}$ for public types, $\{(r_1,\theta_1),\ldots,(r_{t-1},\theta_{t-1})\}$ for reported types, and $\{r_1,\ldots,r_{t-1}\}$ for private types.%
\footnote{For randomized policies, the history also contains policy's random seed in each round.}
Formally, this is the input to the recommendation policy in each round $t$. Borrowing terminology from the Bayesian Persuasion literature, we will often refer to the history as the {\em signal}. We denote the set of all possible histories (signals) at time $t$ by $\mH_t$.

\OMIT{A solution to an instance $\I$ of the Bayesian Exploration game is a randomized online algorithm $\pi$ termed ``recommendation policy" which, at each round $t$, maps 
the current history $H_t$ to a distribution over messages $m_t$ which, in general, are arbitrary bit strings of length polynomial in the size of the instance.}

The recommendation policy $\pi$, the type distribution $\DT$, the state distribution $\prior$, and the reward function $u$ induce a joint distribution $\DT(\varOmega,\mH_t)$ over states and histories, henceforth called the {\em signal structure} at round $t$. Note that it is known to agent $t$.

We are ready to state agents' decision model. Each agent $t$, given the realized message $m$, chooses an action $a_t$ so as to maximize her {\em Bayesian-expected reward}
$$\E[r_t]\equiv\E_{(\omega,H_t)\sim\DT(\varOmega,\mH_t)}
\left[\; \E_{m_t\sim\pi(H_t)}[u(\theta_t,a_t,\omega)\;|m_t = m ]\right].$$
Given the instance of Bayesian Exploration, the goal of the principal is to choose a policy $\pi$ that maximizes (Bayesian-expected) {\em total} reward, \ie $\sum_{t=1}^T \E[r_t]$.\footnote{While the principal must commit to the policy given only the problem instance, the policy itself observes the history and thus can adapt recommendations to inferences about the state based on the history.  See Example~\ref{exp:simple}.}

We assume that the sets $\mA$, $\varTheta$ and $\varOmega$ are finite. We use $\omega_0$ as the random variable for the state, and write $\Pr[\omega]$ for $\Pr[\omega_0=\omega]$. Similarly, we write $\Pr[\theta]$ for $\Pr[\theta_t=\theta]$.

\xhdr{Bayesian-incentive compatibility.}
For public types, we assume the message $m_t$ in each round is a recommended action $a \in \mA$ which, for convenience, we sometimes write as $m_t(\theta_t)$. For private and reported types, we assume that the message $m_t$ in each round is a \emph{menu} mapping types to actions, i.e., $m_t:\varTheta\rightarrow\mA$.
 We further assume $\pi$ is Bayesian incentive-compatible.
\begin{definition}
Let $\EE_t$ be the event that the agents have followed principal's recommendations before round $t$, \ie $a_s = m_s(\theta_s)$ for all rounds $s<t$.
The recommendation policy $\pi$ is {\em Bayesian incentive compatible} (\emph{BIC}) if for all rounds $t$ and messages $m$ such that
\[ \Pr_{(\omega,H_t)\sim\DT(\varOmega,\mH_t)}
    [m = \pi(H_t)\; \mid\; \EE_{t}] > 0,
\]
it holds that for all types $\theta$ and actions $a$,
\begin{align}\label{eq:model-BIC}
\E\left[\; u(\theta,m(\theta),\omega) - u(\theta,a,\omega) \; \mid\; m_t=m, \EE_{t}\;\right] \geq 0,
\end{align}
where the expectation is over $(\omega,H_t)\sim\DT(\varOmega,\mH_t)$.
\end{definition}

\noindent The above assumptions are without loss of generality, by a suitable version of Myerson's ``revelation principle".

\xhdr{Explorability and benchmarks.}
For public types, a type-action pair $(\theta,a)\in \Theta\times \A$ is called \emph{eventually-explorable} in state $\omega$ if there is some BIC recommendation policy that, for $T$ large enough, eventually recommends this action to this agent type with positive probability. Then action $a$ is called \emph{eventually-explorable} for type $\theta$ and state $\omega$. The set of all such actions is denoted $\AExp_{\omega,\theta}$.

Likewise, for private types, a menu is called \emph{eventually-explorable} in state $\omega$ if there is some BIC recommendation policy that eventually recommends this menu with positive probability. The set of all such menus is denoted $\MExp_{\omega}$.

Our benchmark is the best eventually-explorable recommendation for each type. For public and private types, resp., this is
\begin{align}
\OPTpub &= \sum_{\theta \in \varTheta, \omega\in \varOmega} \Pr[\omega] \cdot \Pr[\theta] \cdot \max_{a \in \AExp_{\omega,\theta}} u(\theta, a, \omega).
    \label{eq:bench-public}\\
\OPTpri &= \sum_{\omega\in \varOmega} \Pr[\omega] \cdot\max_{m \in \MExp_{\omega}}\sum_{\theta \in \varTheta} \Pr[\theta] \cdot  u(\theta, m(\theta), \omega).\label{eq:bench-private}
\end{align}
We have $\OPTpub \geq \OPTpri$, essentially because any BIC policy for private types can be simulated as a BIC policy for public types. We provide an example (Example \ref{exp:simple}) when $\OPTpub > \OPTpri$.

\OMIT{
Note that, for all settings, no BIC recommendation policy can out-perform the corresponding benchmark.  Our main technical contributions are (computationally efficient) policies that get arbitrarily close to these benchmarks as the number of agents grows.}


\section{Comparative Statics}
\label{sec:statics}

\newcommand{\pairs}{\AExp_{\omega}}
\newcommand{\pairsPub}{\pairs^{\term{pub}}}
\newcommand{\pairsPri}{\pairs^{\term{pri}}}

\newcommand{\support}{\term{support}}

We discuss how the set of all eventually-explorable type-action pairs (\emph{explorable set}) is affected by the model choice and the diversity of types. The explorable set is all information that can possibly be learned in the public-types model. All else equal, settings with larger explorable set have greater or equal total expected reward, both in benchmark \eqref{eq:bench-public} and in our approximation guarantees. For private types, the exploration set provides an ``upper bound" on the information available to the principal, because the principal does not directly observe the agent types.
\OMIT{Our first result shows that models with public or reported types can explore (strictly) more actions for each type than models with private types.  Thus more information about types (strictly) improves the outcomes. Our second result shows that greater diversity (in the sense of a greater number of possible agent types) improves exploration for public or reported types but, in fact, can {\em harm} exploration for private types.  The intuition is that with private types, diversity can muddle the information of the principal, hindering her ability to learn about the state, whereas for public or reported types diversity only helps the principal refine her beliefs about the state.
} 

\xhdr{Explorability and the model choice.}
Fix an instance of Bayesian Exploration. Let $\pairsPub$ and $\pairsPri$ be the explorable set for a given state $\omega$, for public and private types, respectively.%
\footnote{Equivalently, $\pairsPri$ is the set of all type-action pairs $(\theta,m(\theta))$ that appear in some eventually-explorable menu $m\in \MExp_{\omega}$ in state $\omega$ with private types.}
We will show in Section~\ref{sec:reported} that the explorable set for reported types is $\pairsPub$, too.


\begin{claim}
$\pairsPri \subseteq \pairsPub$.
\end{claim}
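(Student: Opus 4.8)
The plan is to show that any type-action pair explorable with private types is also explorable with public types. The natural approach is a \emph{simulation argument}: given a BIC recommendation policy $\pi^{\term{pri}}$ for private types that explores the pair $(\theta,a)$ in state $\omega$, I would construct a BIC policy $\pi^{\term{pub}}$ for public types that explores the same pair. First I would fix $(\theta,a) \in \pairsPri$ and let $\pi^{\term{pri}}$ be a witnessing private-types policy that recommends a menu $m$ with $m(\theta)=a$, in some round, with positive probability (conditioned on the event $\EE_t$ that all prior recommendations were followed). Crucially, because $\pi^{\term{pri}}$ operates without ever observing agent types, its history consists only of the reward sequence $\{r_1,\dots,r_{t-1}\}$. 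In the public-types model the principal observes strictly more---the realized types $\theta_s$ as well---so a public-types policy can ignore the type information and feed $\pi^{\term{pri}}$ exactly the inputs it expects.

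The key steps are as follows. I would define $\pi^{\term{pub}}$ to internally run $\pi^{\term{pri}}$: in each round, $\pi^{\term{pub}}$ discards the revealed type, passes the reward history to $\pi^{\term{pri}}$, obtains a recommended menu $m_t$, and issues to the current agent (of now-known type $\theta_t$) the action $m_t(\theta_t)$. I then need to verify two things. (1) \textbf{Incentive compatibility is preserved.} The BIC condition \eqref{eq:model-BIC} for $\pi^{\term{pri}}$ states that, conditioned on menu $m_t = m$ and on $\EE_t$, each type $\theta$ weakly prefers $m(\theta)$ to any alternative action. Since the joint distribution over $(\omega,H_t)$ that governs these conditional expectations is \emph{identical} under the simulation---$\pi^{\term{pub}}$'s menu-generation process is a verbatim copy of $\pi^{\term{pri}}$'s, driven by the same reward history and random seed---the same inequalities hold verbatim for $\pi^{\term{pub}}$. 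Thus $\pi^{\term{pub}}$ is BIC. (2) \textbf{The pair is explored.} Under $\pi^{\term{pri}}$ the menu $m$ with $m(\theta)=a$ is recommended with positive probability in some round; conditioned on that event, the arriving agent has type $\theta$ with probability $\Pr[\theta]>0$ independently, and then $\pi^{\term{pub}}$ recommends action $m(\theta)=a$ to a type-$\theta$ agent. Hence the pair $(\theta,a)$ is recommended with positive probability, so $(\theta,a)\in\pairsPub$.

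The one technical point requiring care---and the step I expect to be the main obstacle---is the coupling of the conditioning events $\EE_t$ across the two models. In the BIC definition the expectations are conditioned on all agents having followed recommendations so far, and the induced signal structure $\DT(\varOmega,\mH_t)$ depends on the policy. I would argue that since $\pi^{\term{pri}}$ is BIC, agents do follow recommendations along the relevant event, and because the reward-generating process $u(\theta_s,a_s,\omega)$ is the same and the types are drawn from the same $\DT$ independently of the state, the distribution of the reward history seen by the simulated $\pi^{\term{pri}}$ inside $\pi^{\term{pub}}$ coincides exactly with its distribution in the genuine private-types game. Establishing this distributional identity---essentially that the extra type information available to $\pi^{\term{pub}}$ is never used and hence does not perturb the signal structure on which the BIC inequalities rest---is the crux; once it is in place, both the incentive-compatibility transfer and the positive-probability exploration claim follow immediately.
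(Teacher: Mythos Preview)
Your proposal is correct and follows exactly the simulation idea the paper sketches (the paper's own proof is a single sentence stating that any BIC private-types policy can be simulated by a BIC public-types policy, with details omitted). One small point worth making explicit in your write-up: in the public-types model the agent conditions on the recommended \emph{action} $m_t(\theta_t)$ rather than the full menu $m_t$, so the BIC inequality you need is an average over all menus $m$ with $m(\theta_t)$ equal to that action; since each term in the average is nonnegative by the private-types BIC condition, the averaged inequality holds as well.
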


The idea is that one can simulate any BIC recommendation policy for private types with a BIC recommendation policy for public types; we omit the details.


Interestingly, $\pairsPri$ can in fact be a {\em strict} subset of $\pairsPub$:

\begin{example}
	\label{exp:simple}
	There are 2 states, 2 types and 2 actions:
	$\varOmega = \varTheta = \A = \{0,1\}$.
	States and types are drawn uniformly at random:
	$\Pr[\omega =0] =\Pr[\theta =0] = \tfrac12$.
	Rewards are defined as follows:\\
	\begin{table}[H]
		\centering
		\begin{tabular}{|c||c|c|}
			\hline
			&$a=0$&$a=1$\\
			\hline
			\hline
			$\theta = 0$& $u = 3$ & $u =4$\\
			\hline
			$\theta = 1$& $u = 2$ & $u =0$\\
			\hline
		\end{tabular}
		\quad
		\begin{tabular}{|c||c|c|}
			\hline
			&$a=0$&$a=1$\\
			\hline
			\hline
			$\theta = 0$& $u = 2$ & $u =0$\\
			\hline
			$\theta = 1$& $u = 3$ & $u =4$\\
			\hline
		\end{tabular}
		\caption{Rewards $u(\theta,a,\omega)$ when $\omega =0 $ and $\omega = 1$.}
	\end{table}
\end{example}

\begin{claim}
	In Example \ref{exp:simple}, $\pairsPri$ is a strict subset of $\pairsPub$.
\end{claim}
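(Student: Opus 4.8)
The plan is to compute the two explorable sets explicitly and exhibit a type-action pair lying in $\pairsPub$ but not in $\pairsPri$. Since the reward table satisfies the symmetry $u(\theta,a,\omega)=u(1-\theta,a,1-\omega)$, it suffices to treat the state $\omega=0$ and invoke this symmetry for $\omega=1$. The first thing I would record is the myopic (prior) preference of each type: under the uniform prior, type $0$ values action $0$ at $\tfrac12(3)+\tfrac12(2)=\tfrac52$ versus $\tfrac12(4)+\tfrac12(0)=2$ for action $1$, and type $1$ likewise values action $0$ at $\tfrac52$ versus $2$ for action $1$. Hence, absent any state information, both types strictly prefer action $0$; this pins down the ``default'' behavior in both models.

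For the public-types side I would show $(\theta,a)=(0,1)\in\pairsPub$, which already yields strict containment. The principal recommends action $0$ to the first agent; this is BIC because it is myopically optimal, and since the principal observes both the type and the reward, the realized reward ($3$ vs.\ $2$ for type $0$, or $2$ vs.\ $3$ for type $1$) reveals $\omega$ exactly. Once the state is known to be $0$, recommending action $1$ to type-$0$ agents by full revelation (recommend action $1$ to type $0$ precisely when $\omega=0$) is BIC, since conditioned on this recommendation the agent is certain that $\omega=0$ and action $1$ pays $4>3$. This occurs with positive probability in state $0$, so $(0,1)$ is eventually-explorable. A small amount of pooling across states to satisfy the type-$1$ constraint in fact places all four pairs in $\pairsPub$, but one witness suffices.

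For the private-types side I would argue that the principal is permanently ``stuck'' at the single menu $m_{00}$ sending both types to action $0$. The key observation is that playing $m_{00}$ is completely uninformative: conditioned on $m_{00}$ being followed, the observed reward is $u(\theta_t,0,\omega)$ with $\theta_t$ uniform and independent of $\omega$, so it equals $3$ with probability $\tfrac12$ and $2$ with probability $\tfrac12$ in \emph{both} states. The reward likelihood is therefore flat in $\omega$, and the posterior never moves off the prior. I would formalize this by induction on the round: under the uniform posterior the only menu whose two recommended actions are both individually optimal --- hence the only BIC menu, because with no state information any menu recommendation is independent of $\omega$ and the conditional posterior equals the prior --- is $m_{00}$; playing it yields a flat likelihood, so the posterior at the next round is again uniform and the induction continues. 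Hence $\MExp_{0}=\{m_{00}\}$ and $\pairsPri=\{(0,0),(1,0)\}$, which excludes $(0,1)$.

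Combining the two sides gives $\pairsPri=\{(0,0),(1,0)\}\subsetneq\pairsPub$, with $(0,1)$ witnessing strictness. The main obstacle is the private-types step: I must rule out \emph{every} BIC policy --- including ones that randomize over menus or attempt to bootstrap information --- rather than just the naive one. The leverage is exactly the flat-likelihood computation: because no play of the only sustainable menu can break the symmetry between the states, the posterior is frozen at the prior, so the incentive constraint for any off-default menu can never be satisfied. I would double-check that this argument uses only the reward distribution visible to the principal, who does not observe $\theta_t$, which is precisely where the public/private distinction bites.
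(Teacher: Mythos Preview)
Your proposal is correct and follows essentially the same approach as the paper: both argue that action $0$ is the unique myopic best response for both types, that a single round of action $0$ with an observed type reveals $\omega$ (giving $(0,1)\in\pairsPub$ when $\omega=0$), and that in the private-types model the menu $m_{00}$ yields a reward distribution identical across states, so the posterior never leaves the prior and action $1$ is never explorable. Your write-up is more explicit than the paper's---you spell out the flat-likelihood induction and note that randomized menu recommendations are still independent of $\omega$ when the principal has no information---but the logical skeleton is the same.
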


\begin{proof}
Action 0 is preferred by both types initially. Thus in the first round, the principal must recommend action $0$ in order for the policy to be BIC.  Hence type-action pairs $\{(0,0),(1,0)\}$ are eventually-explorable in all models.

In the second round, the principal knows the reward of the first-round agent.  When types are public or reported, the reward together with the type is sufficient information for the principal to learn the state.  Moving forward, the principal can now recommend the higher-reward action for each type (either directly or, in the case of reported types, through a menu).  Thus, type-action pair $(0,1)$ is eventually-explorable when $\omega=0$ and, similarly, type-action pair $(1,1)$ is eventually-explorable when $\omega=1$.

For private types, samples from the first-round menu (which, as argued above, must recommend action $0$ for both types) do not convey any information about the state, as they have the same distribution in both states. Therefore, action $1$ is not eventually-explorable, for either type and either state.
\end{proof}

\xhdr{Explorability and diversity of agent types.}
Fix an instance of Bayesian Exploration with type distribution $\DT$. We consider how the explorable set changes if we modify the type distribution $\DT$ in this instance to some other distribution $\DT'$. Let $\pairs$ and $\pairs'$ be the corresponding explorable sets, for each state $\omega$.

For public and reported types, we show that the explorable set is determined by the support set of $\DT$, denoted $\support(\DT)$, and can only increase if the support set increases:

\begin{claim}\label{cl:statics-diversity-public}
Consider Bayesian Exploration with public types. Then:
\begin{OneLiners}
\item[(a)] if $\support(\DT)=\support(\DT')$ then $\pairs=\pairs'$.
\item[(b)] if $\support(\DT)\subset \support(\DT')$ then $\pairs\subseteq \pairs'$.
\end{OneLiners}
\end{claim}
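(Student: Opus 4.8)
The plan is to characterize the explorable set $\pairs$ for public types as the output of a monotone closure operator that depends on the type distribution $\DT$ only through its support $\Theta_0 := \support(\DT)$, and then read off both parts. The conceptual engine is that, with public types and deterministic rewards, the principal's posterior over states is governed entirely by the rewards she has observed, and not by how frequently each type arrives. Indeed, since types are drawn independently of the state, the factors $\DT(\theta_s)$ cancel from Bayes' rule, so after any history the posterior is exactly $\prior$ conditioned on the set of states consistent with the observed (type, action, reward) triples. Thus the only role played by $\DT$ is to determine which type-action pairs can be sampled at all: a pair $(\theta,a)$ is a candidate for exploration iff $\theta\in\Theta_0$, in which case (for $T$ large enough) type $\theta$ arrives arbitrarily many times and a single clean sample pins down $u(\theta,a,\omega)$.

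Concretely, for a set $E\subseteq\Theta_0\times\A$ of already-explorable pairs and the realized state $\omega$, let the observations from $E$ induce the partition of $\varOmega$ whose cell through $\omega$ is $K_E(\omega)=\{\omega':u(\theta,a,\omega')=u(\theta,a,\omega)\text{ for all }(\theta,a)\in E\}$. Define $F(E)\supseteq E$ by adjoining every pair $(\theta,a)$ with $\theta\in\Theta_0$ for which there is a union $W$ of cells of this partition with $\omega\in W$ and $\E_{\omega'\sim\prior(\cdot\mid W)}[u(\theta,a,\omega')]\ge\E_{\omega'\sim\prior(\cdot\mid W)}[u(\theta,a',\omega')]$ for all $a'\in\A$. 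I would prove that $\pairs$ equals the set $E^\ast$ obtained by iterating $F$ from $E=\emptyset$ to its fixed point (the seed $F(\emptyset)$ being the prior-optimal pairs, where $W=\varOmega$). Soundness ($E^\ast\subseteq\pairs$) is exactly the phased hidden-exploration policy sketched in the paper: once the pairs in $E$ are explorable, conditioning the recommendation of $a$ on the event $W$ makes it BIC, and $W$ is detectable by the principal precisely because it is a union of $E$-cells. Completeness ($\pairs\subseteq E^\ast$) is an induction on rounds: any BIC policy only ever reveals rewards of pairs it has itself explored, so the event on which it recommends $a$ to $\theta$ is measurable with respect to the partition generated by previously-explored (hence, inductively, fixed-point) pairs, and BIC forces $a$ to be a best response under the induced posterior, placing $(\theta,a)$ in the next application of $F$.

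Given this characterization, both parts follow from the fact that $F$ sees $\DT$ only through $\Theta_0$. For part (a), $\support(\DT)=\support(\DT')$ yields literally the same operator and the same seed, hence the same fixed point, so $\pairs=\pairs'$. For part (b), I note two monotonicities. First, $F$ is monotone in its argument: refining $E$ only subdivides cells, and every union of $E$-cells is also a union of the finer cells, so any witness $W$ for a pair survives. Second, enlarging the support only enlarges the candidate set, so $F_{\Theta_0}(E)\subseteq F_{\Theta_0'}(E)$ whenever $\Theta_0\subseteq\Theta_0'$ (the cells $K_E(\omega)$ depend on $E$ and $u$ alone). Starting both iterations at $\emptyset$ and combining these, an easy induction gives $E_k\subseteq E_k'$ at every stage, whence $\pairs\subseteq\pairs'$. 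Operationally this says the larger-support policy can replicate the smaller-support exploration by carrying the extra types as separate per-type threads that neither disturb the BIC constraints nor the reward information used for the old types.

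I expect the main obstacle to be the completeness direction of the characterization: proving that no BIC policy can explore a pair outside $E^\ast$. The delicate point is to show, by induction over rounds, that the conditioning event underlying any positive-probability recommendation is always a union of cells of the partition generated by pairs already in the closure — equivalently, that the principal genuinely learns nothing about $\omega$ beyond the rewards she has elicited. This is exactly where the independence of types from the state, and the resulting cancellation of the $\DT(\theta_s)$ factors, does the real work; everything else is bookkeeping that reduces both claims to the single observation that the closure operator depends on $\DT$ only through $\support(\DT)$.
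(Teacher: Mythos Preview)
Your high-level plan is sound --- define a closure operator that depends on $\DT$ only through its support, show that the explorable set is its fixed point, and read off (a) and (b) --- but the specific operator $F$ you propose is too small, so the completeness direction $\pairs\subseteq E^\ast$ fails. The gap is in the sentence ``the event on which it recommends $a$ to $\theta$ is measurable with respect to the partition \ldots\ and BIC forces $a$ to be a best response under the induced posterior, placing $(\theta,a)$ in the next application of $F$.'' The recommendation event is indeed cell-measurable, but the policy may be \emph{randomized} within each cell: it can recommend $a$ with probability $q_c$ from cell $c$. The posterior $\Pr[\omega\mid\text{rec}=a]$ is then proportional to $\prior(\omega)\cdot q_{\text{cell}(\omega)}$, which can be \emph{any} convex combination of the cell-conditional posteriors, not just $\prior(\cdot\mid W)$ for a deterministic union $W$. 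Since the set of posteriors at which $a$ is optimal is a polytope, it can meet this simplex strictly in its interior, away from every $\prior(\cdot\mid W)$.

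Concretely: two states with $\prior(\omega{=}0)=0.3$, one type, four actions with reward pairs $(1,0)$, $(0.6,0.6)$, $(0,1)$, $(0.4,0.9)$. Only action $3$ is prior-optimal, and its reward reveals the state, so after one step the cells are $\{0\},\{1\}$. For $\omega=0$, the deterministic choices are $W=\{0\}$ (posterior $p=1$, action $0$ uniquely best) and $W=\{0,1\}$ (posterior $p=0.3$, action $3$ uniquely best), so your $E^\ast$ for $\omega=0$ stabilizes at $\{(\theta,0),(\theta,3)\}$. But the randomized single-round policy that recommends action $1$ with probability $1$ from state $0$ and probability $2/7$ from state $1$ (otherwise action $2$) yields posterior $p=0.6$, at which action $1$ ties for best; hence $(\theta,1)$ is eventually-explorable for $\omega=0$ yet lies outside $E^\ast$.

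The fix is to replace your $F$ by the correct single-round operator: $(\theta,a)\in F(E)$ iff $a$ is signal-explorable for type $\theta$ when the signal is the $E$-cell (equivalently, the LP in Section~\ref{sec:public_single} has a positive optimum). This corrected $F$ still sees $\DT$ only through $\Theta_0$, is still monotone in $E$ (this is exactly Lemma~\ref{lem:infomono}) and in $\Theta_0$, and its fixed point really is $\pairs$ (the content of Lemma~\ref{lem:exp_public}). With that correction your argument goes through. The paper, by contrast, takes a more direct route that sidesteps any explicit characterization of $\pairs$: it simulates an arbitrary BIC policy $\pi$ for $\DT$ inside the instance with $\DT'$ by running $\pi$ on the subsequence of rounds whose type lies in $\support(\DT)$ and playing the myopic best action on the remaining rounds, observing that this simulation is BIC and explores at least what $\pi$ explores.
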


\OMIT{ 
\begin{claim}\label{cl:statics-diversity-public}
Consider Bayesian exploration with public or reported types. Then:
\begin{OneLiners}
\item[(a)] if the supports of distributions $\DT$ and $\DT'$ are the same, then $\pairs=\pairs'$.
\item[(b)] if the support of distribution $\DT$ is contained in the support of distribution $\DT'$ then $\pairs\subseteq \pairs'$.
\end{OneLiners}
\end{claim}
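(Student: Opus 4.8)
The plan is to pin down the explorable set $\pairs$ via an iterative ``phased exploration'' process whose only dependence on $\DT$ is through its support, after which both parts fall out. Throughout fix the state $\omega$ and write $\Theta_0:=\support(\DT)$. The first step is information bookkeeping. Since $u$ is deterministic and types are drawn independently of $\omega$, a set $S\subseteq\Theta_0\times\A$ of already-explored pairs induces a partition $\mathcal{P}(S)$ of $\varOmega$ in which two states are equivalent iff they agree on $u(\theta,a,\cdot)$ for every $(\theta,a)\in S$. In the true state $\omega$, after exploring $S$ the principal knows exactly the cell $[\omega]_S$ containing $\omega$ and nothing more; crucially, which cell she lands in and the posterior $\prior(\cdot\mid[\omega]_S)$ she holds are functions of $\prior$ and $u$ alone, because the type probabilities factor out of the conditioning and cancel. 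They affect only how many rounds it takes to observe each pair, not what is ultimately learned. This is where determinism of $u$ does the real work.

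Next I would invoke the single-round explorability rule that drives the iteration, which is precisely the content of the information-monotonicity analysis of the single-round game. Given current information $\mathcal{P}(S)$, a pair $(\theta,a)$ with $\theta\in\Theta_0$ becomes explorable in state $\omega$ iff there is an event $E$ that is a union of cells of $\mathcal{P}(S)$, contains $[\omega]_S$, has $\prior(E)>0$, and under which $a$ is optimal for $\theta$, i.e. $a\in\argmax_{a'}\E_{\omega'\sim\prior(\cdot\mid E)}\left[u(\theta,a',\omega')\right]$. The principal recommends $a$ to $\theta$ exactly when her observed cell lies in $E$; conditioning on this recommendation gives posterior $\prior(\cdot\mid E)$, so BIC holds, and the event fires in the true state since $[\omega]_S\subseteq E$. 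Iterating, with $S_0=\emptyset$ and $S_{k+1}$ adding all pairs explorable given $\mathcal{P}(S_k)$, yields a monotone sequence stabilizing at $\pairs$. Every ingredient (partitions, events, restricted priors, the $\argmax$) depends on $\DT$ only through the eligibility constraint $\theta\in\Theta_0$, which gives part (a) immediately: equal supports give literally the same process, so $\pairs=\pairs'$.

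For part (b), with $\Theta_0=\support(\DT)\subset\support(\DT')=:\Theta_0'$, I would show $S_k(\Theta_0)\subseteq S_k(\Theta_0')$ by induction on $k$. The base case is trivial. For the step, the hypothesis $S_k(\Theta_0)\subseteq S_k(\Theta_0')$ makes $\mathcal{P}(S_k(\Theta_0'))$ a refinement of $\mathcal{P}(S_k(\Theta_0))$, so any witnessing event $E$ for a pair newly explorable under $\Theta_0$ is also a union of the finer cells, still contains $[\omega]_{S_k(\Theta_0')}\subseteq[\omega]_{S_k(\Theta_0)}\subseteq E$, and carries the same optimality certificate $\prior(\cdot\mid E)$; since $\theta\in\Theta_0\subseteq\Theta_0'$, that pair is explorable under $\Theta_0'$ as well. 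Hence the pairs added for $\Theta_0$ are among those added for $\Theta_0'$, closing the induction, and letting $k\to\infty$ gives $\pairs\subseteq\pairs'$.

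The main obstacle is justifying the iterative characterization itself; the comparative statics are then essentially free. One must verify that ``eventually-explorable'' (existence of some BIC policy over a long enough horizon) is captured exactly by this finite fixed point. The two delicate points are (i) that each support type, having positive probability, arrives often enough as $T$ grows to carry out every scheduled exploration, so that positive probabilities suffice and their magnitudes are irrelevant; and (ii) that the per-pair recommendations drawn from different cells and phases can be bundled into one BIC policy without the exploration of one pair breaking incentives for another, which I would handle by exploring in disjoint phases and mixing each exploration in with a small probability against the current safe recommendation.
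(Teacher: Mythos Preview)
The paper's argument is quite different and much shorter. It never characterizes $\pairs$ at all; instead it takes an arbitrary BIC policy $\pi$ for the $\DT$-instance and simulates it inside the $\DT'$-instance by running $\pi$ only on the subsequence of rounds whose arriving type lies in $\support(\DT)$, and recommending a myopically best action in the remaining rounds. The simulated policy is BIC under $\DT'$, and since every positive-probability history of $\pi$ occurs as a positive-probability sub-history, it eventually explores the same pairs. Part~(b) is then immediate, and (a) follows from (b) by symmetry.

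Your phased fixed-point route is natural, but the single-round rule you state is actually false, so as written the argument has a real gap. You assert that, given partition $\mathcal{P}(S)$, the pair $(\theta,a)$ is explorable in state $\omega$ iff some \emph{union} $E$ of cells with $[\omega]_S\subseteq E$ makes $a$ optimal for $\theta$ under $\prior(\cdot\mid E)$. The ``if'' direction is fine, but ``only if'' fails because BIC policies may randomize: the posterior conditional on recommending $a$ can be an arbitrary reweighting of the cell-posteriors, not only a $0/1$ selection. Concretely, with one type, three equiprobable states, singleton cells, and rewards $u(a_1,\omega_1)=u(a_2,\omega_2)=3$, $u(a_3,\omega_3)=2$, all others $0$, no union containing $\omega_1$ makes $a_3$ optimal (the best candidate $\{\omega_1,\omega_3\}$ gives $a_1$ expected reward $1.5$ versus $1$ for $a_3$); yet recommending $a_3$ with probability $1$ on $\omega_3$ and probability $p\le 2/3$ on $\omega_1$ is BIC, so $a_3$ \emph{is} signal-explorable in $\omega_1$. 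With your rule the iteration would therefore stabilize at a strict subset of $\pairs$.

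The approach is salvageable: replace the event-based rule by the paper's LP characterization of signal-explorability in Section~\ref{sec:public_single}, which genuinely depends only on the partition and on $\theta$, not on $\DT$. With that correction your induction for (a) and (b) goes through, though you then also need Lemma~\ref{lem:exp_public} (or to reprove it) to identify the fixed point with $\pairs$, so the end result is longer than the paper's simulation argument.
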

} 

\begin{proof}[Proof Sketch]
Consider public types (the case of reported types then follows by arguments in Section~\ref{sec:reported}).  Let $\pi$ be a BIC recommendation policy for the instance with type distribution $\DT$ and suppose $\pi$ eventually explores type-action pairs $\pairs$ for this instance and state $\omega$.  Consider the instance with type distribution $\DT'$.  Extend $\pi$ to a policy $\pi'$ as follows: let $T'$ be the subsequence of $T$ for which $\DT(\theta_t)>0$. If $t\not\in T'$, then recommend the action $a$ that maximizes agent $t$'s Bayesian-expected reward.  If $t\in T'$, then consider the sub-history $H\equiv H^{T'}_t$ restricted to $T'$ and recommend action $a\sim\pi(H)$.  Then $\pi'$ is BIC for the instance with type distribution $\DT'$. Furthermore, $\pi'$ eventually explores the same set of type-action pairs $\pairs$ for this modified instance as well (and possibly more) as every history that occurs with positive probability in the original instance occurs as a sub-history in the modified instance with positive probability as well.
\end{proof}

For private types, the situation is more complicated. More types can help for some problem instances. For example, if different types have disjoint sets of available actions (more formally: say, disjoint sets of actions with positive rewards) then we are essentially back to the case of reported types, and the conclusions in Claim~\ref{cl:statics-diversity-public} apply. On the other hand, we can use Example \ref{exp:simple} to show that more types can hurt explorability when types are private. Recall that in this example, for private types only action 0 can be recommended. Now consider a less diverse instance in which only type 0 appears. After one agent in that type chooses action 0, the state is revealed to the principal. For example, when the state $\omega = 0$, action $1$ can be recommended to future agents. This shows that,  in this example, explorable set increases when we have fewer types.


\section{Public Types}
\label{sec:public}

In this section, we develop our recommendation policy for public types. Throughout, $\OPT = \OPTpub$.

\begin{theorem}
\label{thm:public}
Consider an arbitrary instance of Bayesian Exploration with public types.
There exists a BIC recommendation policy with expected total reward at least $\left(T - C \right) \cdot \OPT$, for some constant $C$ that depends on the problem instance but not on $T$. This policy explores all type-action pairs that are eventually-explorable for a given state.
\end{theorem}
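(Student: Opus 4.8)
The plan is to reduce the multi-round problem to an explicit single-round primitive, characterize the eventually-explorable set as the fixed point of an information-closure operator, and then run a phased policy that reaches this fixed point in a constant number of rounds and exploits optimally afterwards.

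First I would analyze a single round under a fixed information structure, i.e.\ a posterior belief / signal $\I$. Given $\I$, I would describe explicitly which actions are one-round explorable for each type $\theta$ and how to recommend them in a BIC way: an action $a$ is one-round explorable for $\theta$ exactly when there is a positive-probability event on which $a$ is (weakly) $\theta$-optimal, in which case the principal recommends $a$ on that event and a safe ``exploit'' action elsewhere, so that \eqref{eq:model-BIC} holds conditional on the recommendation. In this single-round model I would establish \emph{information-monotonicity}: refining $\I$ can only enlarge the set of one-round explorable type-action pairs. Since rewards are deterministic, $r_t = u(\theta_t,a_t,\omega)$, exploring a pair $(\theta,a)$ once reveals $u(\theta,a,\omega)$ exactly and hence strictly refines $\I$ whenever the reward is informative about $\omega$.

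Next I would build the closure operator. Let $\I_0$ be the prior information and let $\AExp^{(0)}$ be the pairs one-round explorable under $\I_0$ (the prior-best actions per type). Given $\AExp^{(k)}$, let $\I_{k+1}$ be the information obtained by additionally observing the rewards of all pairs in $\AExp^{(k)}$, and let $\AExp^{(k+1)}$ be the pairs one-round explorable under $\I_{k+1}$. By information-monotonicity this is a nondecreasing sequence of subsets of the finite set $\varTheta\times\A$, so it stabilizes at a fixed point $\AExp^{(\infty)}$ after at most $|\varTheta|\cdot|\A|$ steps. The crux is to prove $\AExp^{(\infty)}=\AExp_{\omega,\theta}$. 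The inclusion ``$\subseteq$'' is immediate because the closure is realized by a concrete BIC policy. For ``$\supseteq$'' I would show by induction on rounds that \emph{any} BIC policy, which can only gather information by exploring pairs that are already explorable given what it has observed, produces information dominated by $\I_\infty$; by information-monotonicity such a policy can never recommend a pair outside $\AExp^{(\infty)}$. I expect this direction to be the main obstacle, since $\AExp_{\omega,\theta}$ quantifies over all BIC policies and horizons, and the argument must pin its closure to my specific iterative procedure.

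Finally I would assemble the multi-round policy: run one phase per step of the closure, using per-type \thread{}s that explore the currently-explorable pairs in the BIC manner of the single-round analysis and pool the observed rewards across types between phases. Each phase realizes its BIC exploration events with positive, instance-dependent probability, so a constant expected number of rounds suffices per phase; summing over the at-most-$|\varTheta|\cdot|\A|$ phases bounds the total exploration overhead by a constant $C$ independent of $T$. Once the explorable set stabilizes, the principal holds information $\I_\infty$, and I would observe that any two states in the same cell of $\I_\infty$ agree on all explorable rewards, hence have the same explorable set and the same best-explorable value; therefore recommending $\argmax_{a\in\AExp_{\omega,\theta}}\E[u(\theta,a,\omega)\mid\I_\infty]$ for each type is BIC and yields expected per-round reward exactly $\OPT$. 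Counting the at-most-$C$ exploration rounds as contributing at least $0$ and each remaining round as contributing $\OPT$ gives expected total reward at least $(T-C)\cdot\OPT$, and by construction the policy explores every pair in $\AExp_{\omega,\theta}$.
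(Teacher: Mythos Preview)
Your proposal is correct and follows essentially the same architecture as the paper: a phased policy that in each phase explores all currently signal-explorable type-action pairs, uses information-monotonicity to show the explored set dominates that of any BIC policy round-by-round, stabilizes after at most $|\varTheta|\cdot|\A|$ phases, and then exploits. The paper differs only in low-level machinery---it formalizes single-round explorability via an LP (yielding an explicit max-support policy $\pi^{\max}$ and the \term{MaxExplore} sequence) and formalizes ``at least as informative'' via conditional mutual information $I(S';\omega_0\mid S)=0$, whereas you use an event-based characterization and informal refinement; both routes feed the same fixed-point/induction argument.
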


\subsection{A single round of Bayesian Exploration}
\label{sec:public_single}

\xhdr{Signal and explorability.}
We first analyze what actions can be explored by a BIC policy in a single round $t$ of Bayesian Exploration for public types, as a function of the history. Throughout, we suppress $\theta$ and $t$ from our notation.
Let $S$ be a random variable equal to the history at round $t$ (referred to as a {\em signal} throughout this section), $s$ be a realization of $S$, and $\S=\DT(\Omega,\mH)$ be the signal structure: the joint distribution of $(\omega,S)$.  Note different policies induce different histories and hence different signal structures.  Thus it will be important to be explicit about the signal structure throughout this section.

%


\begin{definition}
	Consider a single-round of Bayesian Exploration when the principal receives signal $S$ with signal structure $\S$. An action $a \in \A$ is called {\em signal-explorable for a realized signal $s$} if there exists a BIC recommendation policy $\pi$ such that $\Pr[\pi(s) = a] > 0$. The set of all such actions is denoted as $\EX_s[\S]$. The {\em signal-explorable set}, denoted $\EX[\S]$, is the random subset of actions $\EX_S[\S]$.
\end{definition}



\xhdr{Information-monotonicity.}
We compare the information content of two signals using the notion of conditional mutual information (see
Appendix~\ref{app:info-theory} for background). Essentially, we show that a more informative signal leads to the same or larger explorable set.

\begin{definition}
	We say that signal $S$ \emph{is at least as informative} as signal $S'$ if $I(S' ; \omega\mid S) = 0$.
\end{definition}

Intuitively, the condition $I(S';\omega_0|S)= 0$  means if one is given random variable $S$, one can learn no further information from $S'$ about $\omega_0$. Note that this condition depends not only on the signal structures of the two signals, but also on their joint distribution.

\begin{lemma}
	\label{lem:infomono}
	Let $S,S'$ be two signals with signal structures $\S,\S'$. If $S$ is at least as informative as $S'$, then $\EX_{s'}[\S'] \subseteq \EX_s[\S]$ for all $s' ,s$ such that $\Pr[S= s, S'= s'] > 0$.
\end{lemma}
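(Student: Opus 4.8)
The plan is to prove the containment by a simulation argument: given a BIC policy $\pi'$ that explores action $a$ at realization $s'$ under the less informative structure $\S'$, I would construct a BIC policy $\pi$ under $\S$ that reproduces $\pi'$'s behavior and hence also explores $a$, this time at $s$. The whole argument rests on translating the hypothesis $I(S';\omega\mid S)=0$ into its probabilistic meaning: $S'$ and $\omega$ are conditionally independent given $S$, so that $\Pr[\omega\mid S=s,S'=s']=\Pr[\omega\mid S=s]$ and, more usefully, so that re-sampling a copy of $S'$ from its conditional law given $S$ leaves its joint distribution with $\omega$ unchanged.

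Concretely, fix $a\in\EX_{s'}[\S']$, witnessed by a BIC policy $\pi'$ with $\Pr[\pi'(s')=a]>0$. I would define $\pi$ on a realization $s$ of $S$ as follows: draw an auxiliary signal $\tilde S\sim\Pr[S'\mid S=s]$ using internal randomness (this conditional law is available since the joint structure of $(\omega,S,S')$ is given and common knowledge), and output $\pi'(\tilde S)$. First, $\pi$ explores $a$ at $s$: since $\Pr[S=s,S'=s']>0$ we have $\Pr[\tilde S=s'\mid S=s]>0$, so $\Pr[\pi(s)=a]\ge\Pr[\tilde S=s'\mid S=s]\cdot\Pr[\pi'(s')=a]>0$.

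The heart of the proof is to verify that $\pi$ is BIC under $\S$, and for this I would show that the pair $(\omega,\tilde S)$ produced by $\pi$ has exactly the same joint law as $(\omega,S')$ under $\S'$. Under $\pi$'s process the joint probability factors as $\Pr[\omega,S=s]\cdot\Pr[S'=s'\mid S=s]$, whereas conditional independence gives $\Pr[S'=s'\mid S=s,\omega]=\Pr[S'=s'\mid S=s]$, so the true joint $\Pr[\omega,S=s,S'=s']$ equals that same product; summing over $s$ yields $\Pr_\pi[\omega,\tilde S=s']=\Pr_{\S'}[\omega,S'=s']$. Since the recommendation is a function of $\tilde S$ (through $\pi'$) alone, the joint law of (state, recommendation) under $\pi$ coincides with that under $\pi'$. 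Consequently, for every action $a$ recommended with positive probability and every alternative $a'$,
\[
\E\!\left[u(\theta,a,\omega)-u(\theta,a',\omega)\mid \pi(S)=a\right]
=\E\!\left[u(\theta,a,\omega)-u(\theta,a',\omega)\mid \pi'(S')=a\right]\ge 0,
\]
the inequality being the single-round BIC property of $\pi'$. Thus $\pi$ is BIC under $\S$ and $a\in\EX_s[\S]$, which proves the containment.

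I expect the main obstacle to be the bookkeeping in the joint-law identity of the third paragraph: one must be careful that the auxiliary draw $\tilde S$ in $\pi$ depends on $\omega$ only through $S$, which is exactly what the vanishing conditional mutual information guarantees, and that this makes the simulated signal statistically indistinguishable from the true $S'$ as far as the agent's posterior over $\omega$ is concerned. Everything else --- the positive-probability estimate and matching up the conditioning events in the BIC inequality --- is routine once this identity is in hand.
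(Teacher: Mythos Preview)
Your proposal is correct and follows essentially the same approach as the paper: the paper also defines $\pi$ by $\Pr[\pi(s)=a]=\sum_{s'}\Pr[\pi'(s')=a]\cdot\Pr[S'=s'\mid S=s]$ (your ``sample $\tilde S$ then apply $\pi'$'' described operationally), uses $I(S';\omega\mid S)=0$ to get conditional independence, and verifies the same joint-law identity $\sum_s \Pr[S'=s'\mid S=s]\cdot\Pr[\omega,S=s]=\Pr[\omega,S'=s']$ to transfer the BIC inequality from $\pi'$ to $\pi$. The only cosmetic difference is that the paper writes the BIC check as an unconditional weighted sum over $(\omega,s)$ rather than the equivalent conditional expectation you state.
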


\begin{proof}
Consider any BIC recommendation policy $\pi'$ for signal structure $\S'$. We construct $\pi$ for signal structure $\S$ by setting $\Pr[\pi(s) = a] = \sum_{s'} \Pr[\pi'(s') = a] \cdot Pr[S' = s'\mid S = s]$. Notice that $I(S' ; \omega_0\mid S) = 0$ implies $S'$ and $\omega_0$ are independent given $S$, i.e $\Pr[S' = s'\mid S=s] \cdot \Pr[\omega_0 = \omega\mid S=s] = \Pr[S'=s', \omega_0 = \omega\mid S=s]$ for all $s,s',\omega$. Therefore, for all $s'$ and $\omega$,
\begin{align*}
& \textstyle \sum_s \Pr[S' = s'\mid S = s] \cdot \Pr[\omega_0 = \omega, S= s]\\
&\qquad=\textstyle \sum_s \Pr[S' = s'\mid S=s]
    \cdot \Pr[\omega_0 = \omega\mid S=s] \cdot \Pr[S=s] \\
&\qquad= \textstyle  \sum_s \Pr[S'=s',\omega_0 =\omega\mid S=s] \cdot \Pr[S=s] \\
&\qquad=\textstyle  \sum_s \Pr[S=s,S'=s',\omega_0 =\omega] \\
&\qquad=\Pr[\omega_0 =\omega, S'=s'].
\end{align*}

Therefore $\pi'$ being BIC implies that $\pi$ is also BIC. Indeed, for any $a,a' \in \A$ and $\theta \in \varTheta$, by plugging in the definition of $\pi$,
\begin{align*}
&\textstyle \sum_{\omega,s}\; \Pr[\omega_0 = \omega, S = s] \cdot (u(\theta,a', \omega) - u(\theta,a,\omega)) \cdot \Pr[\pi(s) = a] \\
&\;= \textstyle \sum_{\omega,s'}\;\Pr[\omega_0 = \omega, S' = s'] \cdot (u(\theta,a', \omega) - u(\theta,a,\omega)) \cdot \Pr[\pi'(s') = a]\\
&\;\geq 0.
\end{align*}


Finally, for any $s', s ,a$ such that $Pr[S' = s',S = s] >0 $ and $\Pr[\pi'(s') = a] >0$, we have $\Pr[\pi(s) = a] > 0$. This implies $\EX_{s'}[\S'] \subseteq \EX_s[\S]$.
\end{proof}

\xhdr{Max-Support Policy.}
We can solve the following LP to check whether a particular action $a_0 \in\A$ is signal-explorable given a particular realized signal $s_0\in\X$. In this LP, we represent a policy $\pi$ as a set of numbers
    $x_{a,s} = \Pr[\pi(s)=a]$,
for each action $a\in \A$ and each feasible signal $s\in \X$.

\begin{figure}[H]
\begin{mdframed}
\vspace{-3mm}
\begin{alignat*}{2}
&\textbf{maximize }    x_{a_0,s_0}\  \\
&\textbf{subject to: }\\
    & \textstyle \sum_{\omega \in \varOmega, s \in \X} \;
    \Pr[\omega] \cdot \Pr[s \mid  \omega] \cdot\\
        &\left(u(\theta, a, \omega) - u(\theta, a', \omega)\right) \cdot x_{a,s} \geq 0   &\qquad & \forall a,a' \in \A \\
    & \textstyle \sum_{a\in \A}\; x_{a,s} = 1,  \ &\ & \forall s \in \X \\
    & x_{a,s} \geq 0,  \ &\ & \forall s \in \X, a\in \A
\end{alignat*}
\end{mdframed}
\label{fig:public_lp}
\end{figure}

Since the constraints in this LP characterize any BIC recommendation policy, it follows that action $a_0$ is signal-explorable given realized signal $s_0$ if and only if the LP has a positive solution. If such solution exists, define recommendation policy $\pi = \pi^{a_0,s_0}$ by setting
    $\Pr[\pi(s) = a] = x_{a,s}$ for all $a\in \A, s\in \X$.
Then this is a BIC recommendation policy such that
    $\Pr[\pi(s_0) = a_0] > 0$.

\begin{definition}
Given a signal structure $\S$, a BIC recommendation policy $\pi$ is called  \emph{max-support} if $\forall s \in \X$  and signal-explorable action $a\in \A$ given $s$, $\Pr[\pi(s) = a] > 0$.
\end{definition}

It is easy to see that we obtain max-support recommendation policy by averaging the $\pi^{a,s}$ policies defined above. Specifically, the following policy is BIC and max-support:
\begin{align}\label{eq:pimax}
\pi^{\max} = \frac{1}{|\X|} \sum_{s \in \X} \frac{1}{|\EX_s[\S]|} \sum_{a \in \EX_s[\S]} \pi^{a,s}.
\end{align}

%

\xhdr{Maximal Exploration.}
We design a subroutine \term{MaxExplore} which outputs a sequence of actions with two properties: it includes every signal-explorable action at least once, and each action in the sequence marginally distributed as $\pi^{\max}$. The length of this sequence, denoted $L_{\theta}$, should satisfy
\begin{align}\label{eq:public-L}
L_{\theta} \geq \max_{(a,s)\in \A\times \X \text{ with } \Pr[\pi^{\max}(s)=a] \neq 0} \quad \frac{1}{\Pr[\pi^{\max}(s)=a]}.
\end{align}

This step is essentially from \cite{ICexplorationGames-ec16};
\OMIT{\jmcomment{Reviewer 2 mentioned that we cite your EC paper twice. I know Alex wants to show that this is from your working paper which is not the version for EC.  But it might confuse the reviewer. Not sure if we should do this.}}
 we provide the details below for the sake of completeness.
The idea is to put $C_a = L_{\theta} \cdot \Pr[\pi^{\max}(S) = a]$ copies of each action $a$ into a sequence of length $L_{\theta}$ and randomly permute the sequence.
 However, $C_a$ might not be an integer, and in particular may be smaller than 1. The latter issue is resolved by making $L_{\theta}$ sufficiently large. For the former issue, we first put $\lfloor C_a \rfloor$ copies of each action $a$ into the sequence, and then sample the remaining
    $L_\theta - \sum_a \lfloor C_a \rfloor$
actions according to distribution
    $\pRes(a) = \frac{C_a - \lfloor C_a \rfloor}{L_\theta - \sum_a \lfloor C_a \rfloor}$.
For details, see Algorithm \ref{alg:public_explore}.
 \begin{algorithm}[H]
    \caption{Subroutine MaxExplore}
    	\label{alg:public_explore}
    \begin{algorithmic}[1]
	\STATE \textbf{Input:} type $\theta$, signal $S$ and signal structure $\S$.
	\STATE \textbf{Output:} a list of actions $\alpha$
	\STATE Compute $\pi^{\max}$ as per \eqref{eq:pimax}
		\STATE Initialize $Res = L_{\theta}$.
		\FOR {each action $ a \in \A$}
							\STATE $C_a \leftarrow  L_{\theta} \cdot \Pr[\pi^{\max}(S) = a]$
                     		\STATE Add $\lfloor C_a \rfloor$ copies of action $a$ into list $\alpha$.
			\STATE $Res \leftarrow Res -\lfloor C_a \rfloor $.
			\STATE $\pRes(a)\leftarrow  C_a -  \lfloor C_a\rfloor$
		\ENDFOR
		\STATE $\pRes(a) \leftarrow \pRes(a) / Res$, $\forall a \in \A$.
		\STATE Sample $Res$ many actions from distribution $\pRes$ independently and add these actions into $\alpha$.
		\STATE Randomly permute the actions in $\alpha$.
	\RETURN $\alpha$.	
     \end{algorithmic}
\end{algorithm}

\begin{claim}
\label{clm:maxexplore}
Given type $\theta$ and signal $S$, MaxExplore outputs a sequence of $L_{\theta}$ actions.
Each action in the sequence marginally distributed as $\pi^{\max}$.
For any action $a$ such that $\Pr[\pi^{\max} =a] >0$, $a$ shows up in the sequence at least once with probability exactly 1.
MaxExplore runs in time polynomial in $L_{\theta}$, $|\A|$, $|\varOmega|$ and $|\X|$ (size of the support of the signal).
\end{claim}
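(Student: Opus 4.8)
The plan is to verify the four asserted properties in turn, treating the input signal $s$ as fixed (so that $\Pr[\pi^{\max}(S)=a]$ denotes $\Pr[\pi^{\max}(s)=a]$ for this realized signal) and writing $C_a = L_\theta\cdot\Pr[\pi^{\max}(s)=a]$. The backbone of the whole argument is the single identity $\sum_{a\in\A} C_a = L_\theta$, which holds because $\Pr[\pi^{\max}(s)=\,\cdot\,]$ is a probability distribution over $\A$.

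First I would establish the length claim. The algorithm inserts $\lfloor C_a\rfloor$ deterministic copies of each $a$ and then exactly $Res = L_\theta - \sum_a\lfloor C_a\rfloor$ further actions, so the total is $L_\theta$ by construction. I would check that $Res$ is a nonnegative integer (it is, since $\sum_a\lfloor C_a\rfloor \le \sum_a C_a = L_\theta$ and $L_\theta$ is an integer) and that the residual weights form a genuine distribution: $\sum_a(C_a-\lfloor C_a\rfloor) = L_\theta - \sum_a\lfloor C_a\rfloor = Res$, so after normalization $\sum_a \pRes(a)=1$ (the degenerate case $Res=0$ being vacuous, as then every $C_a$ is an integer and no residual sampling occurs).

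For the marginal-distribution claim, the key observation is that the expected number of copies of $a$ in $\alpha$ is exactly $C_a$: the deterministic copies contribute $\lfloor C_a\rfloor$, and the residual sampling contributes $Res\cdot\pRes(a) = C_a-\lfloor C_a\rfloor$ in expectation, summing to $C_a$. Because $\alpha$ is then uniformly randomly permuted, all positions are exchangeable, so $\Pr[\alpha_i=a]$ is the same for every $i$ and sums over $i$ to the expected number of copies of $a$; hence $\Pr[\alpha_i=a]=\E[\#\text{copies of }a]/L_\theta = C_a/L_\theta = \Pr[\pi^{\max}(s)=a]$. This is the step where the precise definition of $\pRes$ matters: the residual distribution is calibrated so that the deterministic-plus-random counts reproduce $C_a$ exactly, which makes the marginal match $\pi^{\max}$ with equality rather than approximately.

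The coverage claim is where the choice of $L_\theta$ does the work, and I expect it to be the main conceptual point. If $\Pr[\pi^{\max}(s)=a]>0$, then by the defining inequality \eqref{eq:public-L} we have $L_\theta \ge 1/\Pr[\pi^{\max}(s)=a]$, hence $C_a = L_\theta\cdot\Pr[\pi^{\max}(s)=a]\ge 1$ and so $\lfloor C_a\rfloor\ge 1$. Thus at least one copy of $a$ is inserted deterministically and $a$ appears in $\alpha$ with probability exactly $1$; the random residual sampling is never needed for coverage, only for matching the fractional part of the marginal. Here the maximum over all $(a,s)$ in \eqref{eq:public-L} is exactly what lets a single length $L_\theta$ guarantee coverage simultaneously for every realized signal. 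Finally, for the running time I would note that forming $\pi^{\max}$ via \eqref{eq:pimax} amounts to solving the $O(|\A|\cdot|\X|)$ explorability LPs, each of size polynomial in $|\A|,|\varOmega|,|\X|$; the remaining work --- inserting at most $L_\theta$ copies per action, drawing $Res\le L_\theta$ residual samples, and permuting $L_\theta$ items --- is polynomial in $L_\theta,|\A|,|\varOmega|,|\X|$.
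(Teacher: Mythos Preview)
Your proof is correct and follows exactly the approach the paper sketches in the paragraph preceding Algorithm~\ref{alg:public_explore}; the paper itself states Claim~\ref{clm:maxexplore} without a formal proof (deferring to \cite{ICexplorationGames-ec16}), and you have correctly filled in the details --- the identity $\sum_a C_a = L_\theta$, the exchangeability argument for the marginal, and the use of \eqref{eq:public-L} to force $\lfloor C_a\rfloor\ge 1$.
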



\subsection{Main Recommendation Policy}
\label{sec:public_main}

Algorithm \ref{alg:public_main} is the main procedure of our recommendation policy. It consists of two parts: \emph{exploration}, which explores all the eventually-explorable actions, and \emph{exploitation}, which simply recommends the best explored action for a given type. The exploration part proceeds in phases. In each phase $\ell$, each type $\theta$ gets a sequence of $L_{\theta}$ actions from MaxExplore using the data collected before this phase starts. The phase ends when every agent type $\theta$ has finished $L_{\theta}$ rounds. We pick parameter $L_{\theta}$ large enough so that the condition
\eqref{eq:public-L} is satisfied for all phases $\ell$ and all possible signals $S=S_\ell$. (Note that $L_{\theta}$ is finite because there are only finitely many such signals.)
  After $|\A| \cdot | \varTheta|$ phases, our recommendation policy enters the exploitation part. See Algorithm \ref{alg:public_main} for  details.

 \begin{algorithm}[t]
    \caption{Main procedure for public types }
    	\label{alg:public_main}
    \begin{algorithmic}[1]
    \STATE Initialization: signal $S_1 = \S_1 = \perp$,
             phase count $\ell = 1$, index $i_{\theta} = 0$ for each type $\theta \in \varTheta$.
	\FOR {rounds $t=1$ to $T$}
		\IF {$\ell \leq |\A|\cdot |\varTheta|$}
		 \STATE \COMMENT{Exploration}
		\STATE Call thread $\thread(\theta_t)$.
			\IF {every type $\theta$ has finished $L_{\theta}$ rounds in the current phase ($i_{\theta} \geq L_{\theta}$)}
				\STATE Start a new phase: $\ell \leftarrow \ell + 1$.
				\STATE Let $S_\ell$ be the signal for phase $\ell$: 
                 the set of all observed type-action-reward triples.
        \STATE Let $\S_\ell$ be the signal structure for $S_\ell$
         given the realized type sequence $(\theta_1,...,\theta_t)$.
			\ENDIF
		\ELSE
			\STATE \COMMENT{Exploitation}
			\STATE Recommend the best explored action for agent type $\theta_t$.
		\ENDIF
	\ENDFOR
     \end{algorithmic}
\end{algorithm}

There is a separate thread for each type $\theta$, denoted $\thread(\theta)$,  which is called whenever an agent of this type shows up; see Algorithm \ref{alg:public_sub}. In a given phase $\ell$, it recommends the $L_{\theta}$ actions computed by MaxExplore, then switches to the best explored action. The thread only uses the information collected before the current phase starts: the signal $S_\ell$ and signal structure $\S_\ell$.

 \begin{algorithm}[h]
    \caption{Thread for agent type $\theta$: $\thread(\theta)$ }
    	\label{alg:public_sub}
    \begin{algorithmic}[1]
		\IF {this is the first call of $\thread(\theta)$ of the current phase}
			\STATE Compute a list of $L_{\theta}$ actions $\alpha_{\theta} \leftarrow $ MaxExplore($\theta, S_\ell, \S_\ell$).
			\STATE Initialize the index of type $\theta$: $i_{\theta} \leftarrow 0$.
		\ENDIF
		\STATE $i_{\theta} \leftarrow i_{\theta} + 1$.
		\IF {$i_{\theta} \leq L_{\theta}$}
			\STATE Recommend action $\alpha_{\theta} [i_{\theta}]$.
		\ELSE
			\STATE Recommend the best explored action of type $\theta$.
		\ENDIF
     \end{algorithmic}
\end{algorithm}

The BIC property follows easily from Claim \ref{clm:maxexplore}. The key is that Algorithm \ref{alg:public_main} explores all  eventually-explorable type-action pairs.



\OMIT{The performance analysis proceeds as follows. First, we upper-bound the expected number of rounds of a phase (Lemma \ref{lem:epoch}). Then we show, in Lemma \ref{lem:exp_public}, that Algorithm \ref{alg:public_main} explores all  eventually-explorable type-action pairs in $|\A| \cdot |\varTheta|$ phases. We use these two lemmas to prove the main theorem.}

\OMIT{ 
\begin{lemma}
\label{lem:epoch}
The expected number of rounds in each phase $\ell$ at most
$ \sum_{\theta\in\varTheta} \frac{L_{\theta}}{\Pr[\theta]}$.
\end{lemma}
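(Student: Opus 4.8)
The plan is to express the length of a phase as the maximum over types of a per-type waiting time, and then bound the maximum by the sum of these waiting times. This reduces the lemma to a routine negative-binomial expectation computation.

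First I would fix a phase and re-index rounds so that the phase begins at round $1$. Two facts drive the argument. The agent types $\theta_1,\theta_2,\dots$ are drawn i.i.d.\ from $\DT$, and the rule in Algorithm~\ref{alg:public_main} that ends the phase depends \emph{only} on the per-type counters $i_\theta$, i.e.\ on how many agents of each type have arrived so far, and not on the realized rewards, the state $\omega$, or which actions were recommended. Hence the type-arrival process inside the phase is an i.i.d.\ sequence that is not disturbed by the exploration logic. For each type $\theta\in\support(\DT)$, let $N_\theta$ be the number of rounds until the $L_\theta$-th agent of type $\theta$ arrives, and set $N=\max_{\theta\in\support(\DT)} N_\theta$. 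By the termination rule the phase occupies exactly $N$ rounds.

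Next I would apply the elementary inequality $\max_\theta N_\theta \le \sum_\theta N_\theta$, valid since every $N_\theta\ge 0$, and take expectations, using linearity to obtain $\E[N]\le \sum_\theta \E[N_\theta]$. It then remains to evaluate $\E[N_\theta]$. Since each arriving agent has type $\theta$ independently with probability $\Pr[\theta]$, the variable $N_\theta$ is the waiting time for the $L_\theta$-th success in i.i.d.\ Bernoulli$(\Pr[\theta])$ trials. Writing $N_\theta=\sum_{j=1}^{L_\theta} G_j$ as a sum of inter-arrival gaps, each geometric with mean $1/\Pr[\theta]$, gives $\E[N_\theta]=L_\theta/\Pr[\theta]$. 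Substituting yields $\E[N]\le \sum_{\theta} L_\theta/\Pr[\theta]$, which is the claim.

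The argument is genuinely short, and the one step I would be careful about is the independence assertion in the second paragraph: I must confirm that the phase boundary is a stopping time determined solely by the exogenous type counts, so that conditioning on how exploration unfolds does not bias the negative-binomial waiting times $N_\theta$. The only other thing to check is the degenerate case of types with $\Pr[\theta]=0$; such types never arrive and need no exploration, so one takes $L_\theta=0$ for them and restricts the sums to $\support(\DT)$, which avoids any division by zero.
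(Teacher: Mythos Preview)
Your proposal is correct and follows exactly the same approach as the paper's own (one-line) proof: the phase ends once each type $\theta$ has appeared $L_\theta$ times, so its length is $\max_\theta N_\theta \le \sum_\theta N_\theta$, and each $\E[N_\theta]=L_\theta/\Pr[\theta]$ by the standard negative-binomial computation. You simply spell out the details (the decomposition into geometric inter-arrival gaps and the stopping-time/independence check) that the paper leaves implicit.
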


\begin{proof}
Phase ends as soon as each type has shown up at least $L_{\theta}$ times. The expected number of rounds by which this happens is at most
$ \sum_{\theta\in\varTheta} \frac{L_{\theta}}{\Pr[\theta]}$.
\end{proof}
} 

\OMIT{
Notice that in Algorithm \ref{alg:public_main} the partition of phases depends only on realized types $\theta_1,...,\theta_T$.
\begin{claim}
Given the sequence $\theta_1,...,\theta_T$, the partition of phases in Algorithm \ref{alg:public_main} is fixed.
\end{claim}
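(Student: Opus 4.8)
The statement asserts a determinism property: although Algorithm~\ref{alg:public_main} is randomized (through the permutation and residual sampling inside \term{MaxExplore}, and through the rewards it observes, which depend on the realized state $\omega$), the way rounds are grouped into phases is a deterministic function of the realized type sequence $\theta_1,\dots,\theta_T$ alone. The plan is to isolate the exact mechanism that triggers a phase transition and show that this mechanism reads only the type-arrival counts, never any random bits. Concretely, a new phase is started in Algorithm~\ref{alg:public_main} precisely when the guard ``every type $\theta$ has finished $L_{\theta}$ rounds in the current phase ($i_\theta \ge L_\theta$)'' becomes true. Here each threshold $L_\theta$ is a fixed instance-dependent parameter, chosen in advance to satisfy \eqref{eq:public-L} for all phases; crucially, it does not depend on the algorithm's random seed nor on the observed rewards. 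Thus the entire phase-boundary logic is driven by the counters $i_\theta$ and the constants $L_\theta$.

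\textbf{Key steps.} First I would observe, by inspecting $\thread(\theta)$ in Algorithm~\ref{alg:public_sub}, that within a phase the counter $i_\theta$ is reset to $0$ at the phase's first type-$\theta$ arrival and incremented by exactly $1$ on each subsequent type-$\theta$ arrival. Hence, after processing any round $t$ of the current phase, $i_\theta$ equals the number of indices $s$ in the current phase with $\theta_s=\theta$ — a quantity determined entirely by $\theta_1,\dots,\theta_t$. Second, I would make the phase boundaries explicit and prove they match the algorithm's behavior by induction. Set $t_0 = 0$, and given $t_{\ell-1}$ define $t_\ell$ to be the smallest round $t > t_{\ell-1}$ such that
\[
\bigl|\{\, s : t_{\ell-1} < s \le t,\ \theta_s = \theta \,\}\bigr| \ge L_\theta
\qquad \text{for every } \theta \in \varTheta.
\]
The inductive claim is that phase $\ell$ of Algorithm~\ref{alg:public_main} consists of exactly rounds $t_{\ell-1}+1,\dots,t_\ell$. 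The base case $\ell=1$ is immediate, and the inductive step follows because, by the first observation, the guard checked at round $t$ evaluates to true exactly when all per-type counts since $t_{\ell-1}$ reach their thresholds, i.e.\ exactly at $t = t_\ell$. Since each $t_\ell$ is defined purely in terms of $\theta_1,\dots,\theta_T$ and the fixed constants $\{L_\theta\}$, the boundaries are deterministic.

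\textbf{Handling the exploration/exploitation split.} I would finish by noting that the transition out of the exploration part occurs after a fixed number $|\A|\cdot|\varTheta|$ of phases, so the round at which exploitation begins, namely $t_{|\A|\cdot|\varTheta|}$, is itself a deterministic function of the type sequence. This shows the full partition — exploration phases together with the final exploitation segment — is fixed once $\theta_1,\dots,\theta_T$ is given. It is worth emphasizing in the write-up what the randomness does and does not affect: the \term{MaxExplore} permutation and residual sampling determine \emph{which} actions are recommended within a phase, and the state $\omega$ determines the rewards, but neither feeds back into the counters $i_\theta$ or the thresholds $L_\theta$, so neither can shift a phase boundary.

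\textbf{Main obstacle.} There is no substantive mathematical difficulty here; the only care required is bookkeeping. The subtle point to state cleanly is that $L_\theta$ is chosen once and for all (``large enough for all phases''), independent of the run, so that the comparison $i_\theta \ge L_\theta$ has no hidden dependence on randomness — if the thresholds were adapted per phase based on observed data, the conclusion could fail. Making that independence explicit, together with the inductive matching of $t_\ell$ to the algorithm's phase counter, constitutes the whole argument.
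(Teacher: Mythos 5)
Your proof is correct and takes essentially the same route as the paper, which in fact states this claim without any proof at all (in the source it is even relegated to a commented-out remark), treating it as immediate from exactly the observation you formalize: phase transitions are triggered solely by the per-type counters $i_\theta$ reaching the fixed, run-independent thresholds $L_\theta$, so the boundaries $t_\ell$ are functions of $\theta_1,\dots,\theta_T$ alone. Your inductive bookkeeping, including the remark that the randomness in MaxExplore and the realized rewards never feed back into the counters or thresholds, is just the explicit version of what the paper leaves implicit, so there is nothing to correct.
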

} 

The following lemma compares the exploration of Algorithm \ref{alg:public_main} with $l$ phases and some other BIC recommendation policy with $l$ rounds. Notice that a phase in Algorithm \ref{alg:public_main} has many rounds.
\begin{lemma}
\label{lem:exp_public}
Fix phase $\ell>0$ and the sequence of agent types $\theta_1,...,\theta_T$. Assume Algorithm \ref{alg:public_main} has been running for at least $\min(l, |\A|\cdot |\varTheta|)$ phases.
For a given state $\omega$, if type-action pair $(\theta,a)$ can be explored by some BIC recommendation policy $\pi$ at round $\ell$ with positive probability, then such action is explored by Algorithm \ref{alg:public_main} by the end of phase $\min(l, |\A|\cdot |\varTheta|)$ with probability $1$.
\end{lemma}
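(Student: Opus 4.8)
The plan is to proceed by strong induction on the phase number $\ell$, comparing the Algorithm's exploration through its first $\ell$ phases with what an arbitrary BIC policy achieves in its first $\ell$ rounds. Let $E_\ell$ denote the set of type-action pairs that Algorithm~\ref{alg:public_main} explores with probability $1$ by the end of phase $\ell$ in state $\omega$, and let $E^*_\ell$ denote the set of pairs that some BIC policy explores at round $\ell$ with positive probability in state $\omega$. The goal is to establish $E^*_\ell \subseteq E_\ell$ for every $\ell$; the stated bound with the cap $\min(\ell, |\A|\cdot|\varTheta|)$ then follows from a stabilization argument.

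First I would record two structural facts about the Algorithm. (i) The sets $E_\ell$ are non-decreasing, since the signal only accumulates more triples and an explored pair stays explored. (ii) If $E_\ell = E_{\ell-1}$, then the triples gathered in phase $\ell$ concern only pairs already observed, and because rewards are deterministic functions of $(\theta,a,\omega)$ these re-observations add no information about $\omega$; hence the signal at the start of phase $\ell+1$ is informationally equivalent to that at the start of phase $\ell$, MaxExplore reproduces the same signal-explorable set, and $E_{\ell+1}=E_\ell$. Since each strict increase adds at least one of the at most $|\A|\cdot|\varTheta|$ pairs, the process reaches a fixed point by phase $|\A|\cdot|\varTheta|$, so $E_k=E_{|\A|\cdot|\varTheta|}$ for all $k\geq |\A|\cdot|\varTheta|$; this is exactly what lets us replace $E_\ell$ by $E_{\min(\ell,|\A|\cdot|\varTheta|)}$.

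The heart of the induction is an informativeness comparison. Fix a BIC policy $\pi$ that explores $(\theta,a)$ at round $\ell$, let $S^\pi$ be its accumulated signal over rounds $1,\dots,\ell-1$, and let $S_\ell$ be the Algorithm's signal at the start of phase $\ell$. Any pair $(\theta_s,a_s)$ appearing in $\pi$'s history at a round $s<\ell$ is explored by the (still BIC) prefix of $\pi$ at round $s$, hence lies in $E^*_s \subseteq E_s \subseteq E_{\ell-1}$ by the induction hypothesis and fact (i). Conditioned on $S_\ell$, which already records the reward $u(\theta',a',\omega)$ of every pair in $E_{\ell-1}$, every reward entry of $S^\pi$ is thereby pinned down, and the remaining contents of $S^\pi$ are functions of these rewards together with $\pi$'s own coins, which are independent of $\omega$. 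Therefore $I(S^\pi;\omega\mid S_\ell)=0$, i.e.\ $S_\ell$ is at least as informative as $S^\pi$. By the information-monotonicity Lemma~\ref{lem:infomono}, $\EX_{s^\pi}[\S^\pi]\subseteq \EX_{s_\ell}[\S_\ell]$ for the relevant realized signals, so $a$ is signal-explorable for $S_\ell$; and since MaxExplore plays the max-support policy $\pi^{\max}$, Claim~\ref{clm:maxexplore} guarantees that every signal-explorable action is explored with probability $1$. Thus $(\theta,a)\in E_\ell$, closing the induction, and the base case $\ell=1$ holds because both sides equal the signal-explorable set for the empty signal.

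I expect the main obstacle to be making the informativeness step fully rigorous, since $S^\pi$ and $S_\ell$ range over different rounds and different internal randomness; fixing the type sequence $\theta_1,\dots,\theta_T$ is precisely what permits a clean coupling. The delicate point is to argue that conditioning on $S_\ell$ renders $S^\pi$ independent of $\omega$ even though $S^\pi$ encodes $\pi$'s adaptive, history-dependent action choices. This is exactly where the determinism of $u$ and the containment of $\pi$'s explored pairs in $E_{\ell-1}$ carry the argument, reducing the entire comparison to a statement about \emph{which} type-action pairs have been observed rather than about the detailed realizations of the signals.
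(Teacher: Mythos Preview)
Your proposal is correct and follows essentially the same approach as the paper: induction on $\ell$, using the induction hypothesis to argue that every type-action pair appearing in $\pi$'s first $\ell-1$ rounds already lies in the Algorithm's explored set, hence $I(S^\pi;\omega\mid S_\ell)=0$, then invoking Lemma~\ref{lem:infomono} and Claim~\ref{clm:maxexplore}. The paper makes the conditional-independence step explicit via the chain rule for mutual information (peeling off $R$ and each $H_t$ in turn), and handles the cap $|\A|\cdot|\varTheta|$ by the same stabilization argument you give.
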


\begin{proof}
We prove this by induction on $\ell$ for $\ell \leq |\A|\cdot |\varTheta|$. Base case $\ell=1$ is trivial by Claim \ref{clm:maxexplore}. Assuming the lemma is correct for $\ell-1$, let's prove it's correct for $\ell$.

Let $S= S_l$ be the signal of Algorithm \ref{alg:public_main} by the end of phase $\ell-1$.  Let $S'$ be the history of $\pi$ in the first $\ell-1$ rounds. More precisely,
    $S' = (R, H_1,...,H_{l-1})$,
where $R$ is the internal randomness of policy $\pi$, and
    $H_t = (\Theta_t, A_t, u(\Theta_t, A_t, \omega_0))$
is the type-action-reward triple in round $t$ of policy $\pi$.

The proof plan is as follows. We first show that $I(S';\omega_0|S) =0 $. Informally, this means the information collected in the first $l-1$ phases of Algorithm \ref{alg:public_main} contains all the information $S'$ has about the state $w_0$. After that, we will use the information monotonicity lemma to show that phase $l$ of Algorithm \ref{alg:public_main} explores all the action-type pairs $\pi$ might explore in round $l$.

First of all, we have
\begin{align*}
I(S'; \omega_0\mid  S)
    &= I(R,H_1,...,H_{l-1}; \omega_0\mid  S)\\
   & = I(R; \omega_0\mid  S) + I(H_1,...,H_{l-1}; \omega_0\mid S, R) \\
    &= I(H_1,...,H_{l-1}; \omega_0\mid S, R).
\end{align*}

By the chain rule of mutual information, we have
\begin{align*}
 I(H_1,...,H_{l-1}; \omega_0\mid S, R) 
 = I(H_1;\omega_0\mid S,R) + \cdots + I(H_{l-1}; \omega_0\mid S,R,H_1,...,H_{l-2}).
\end{align*}

For all $t \in [l-1]$, we have
\begin{align*}
I(H_t; \omega_0\mid S,R,H_1,...,H_{t-1}) 
&= I(\Theta_t, A_t, u(\Theta_t, A_t, \omega_0); \omega_0\mid S,R,H_1,...,H_{t-1}) \\
&= I(\Theta_t ; \omega_0\mid S,R,H_1,...,H_{t-1})\\
&\qquad+  I(A_t, u(\Theta_t, A_t, \omega_0); \omega_0\mid S,R,H_1,...,H_{t-1},\Theta_t) \\
&= I(A_t, u(\Theta_t, A_t, \omega_0); \omega_0\mid S,R,H_1,...,H_{t-1},\Theta_t).
\end{align*}
Notice that the suggested action $A_t$ is a deterministic function of randomness of the recommendation policy $R$,  history of previous rounds $H_1,...,H_{t-1}$ and type in the current round $\Theta_t$. Also notice that, by induction hypothesis, $u(\Theta_t, A_t, \omega_0)$ is a deterministic function of $S,R,H_1,...,H_{t-1},\Theta_t, A_t$. Therefore we have
\[
I(H_t; \omega_0\mid S,R,H_1,...,H_{t-1}) = 0, \qquad \forall t \in [l-1].
\]
Then we get
$ I(S'; \omega_0 \mid  S) = 0.$

By Lemma \ref{lem:infomono}, we know that $\EX[\S'] \subseteq \EX[\S]$. For state $\omega$, there exists a signal $s'$ such that $\Pr[S'=s'\mid \omega_0 =\omega] >0 $ and $a \in \EX_{s'} [\S']$. Now let $s$ be the realized value of $S$ given $\omega_0 = \omega$, we know that $\Pr[S'=s'\mid S=s] >0$, so $a \in \EX_s[\S]$. By Claim \ref{clm:maxexplore}, we know that at least one agent of type $\theta$ in phase $\ell$ of Algorithm \ref{alg:public_main} will choose action $a$.

Now consider the case when $\ell > |\A| \cdot |\varTheta|$. Define \ALG to be the variant of Algorithm \ref{alg:public_main} such that it only does exploration (removing the if-condition and exploitation in Algorithm \ref{alg:public_main}). For $\ell > |\A| \cdot |\varTheta|$, the above induction proof still work for \ALG, i.e. for a given state $\omega$, if an action $a$ of type $\theta$ can be explored by a BIC recommendation policy $\pi$ at round $\ell$, then such action is guaranteed to be explored by \ALG by the end of phase $\ell$. Now we are going to argue that \ALG won't explore any new action-type pairs after phase $|\A| \cdot |\varTheta|$. Call a phase exploring if in that phase \ALG explores at least one new action-type pair. As there are  $ |\A| \cdot |\varTheta|$ type-action pairs, \ALG can have at most $ |\A| \cdot |\varTheta|$ exploring phases. On the other hand, once \ALG has a phase that is not exploring, because the signal stays the same after that phase, all phases afterwards are not exploring. So, \ALG does not have any exploring phases after phase $|\A| \cdot |\varTheta|$. For $\ell > |\A| \cdot |\varTheta|$, the first $|   \A| \cdot |\varTheta|$ phases of Algorithm \ref{alg:public_main} explores the same set of type-action pairs as the first $\ell$ phases of \ALG.
\end{proof}

\begin{proof}[Proof of Theorem \ref{thm:public}]
Algorithm \ref{alg:public_main} is BIC  by Claim \ref{clm:maxexplore}. By Lemma \ref{lem:exp_public}, Algorithm \ref{alg:public_main} explores all the eventually-explorable type-actions pairs after $|\A|\cdot |\varTheta|$ phases.
After that, for each agent type $\theta$, Algorithm \ref{alg:public_main} recommends the best explored action: \\$ \arg\max_{a \in \AExp_{\omega,\theta}} u(\theta, a, \omega)$ with probability exactly 1.%
\footnote{This holds with probability exactly 1, provided that our algorithm finishes $|\A|\cdot |\varTheta|$  phases. If some undesirable low-probability event happens, \eg if all agents seen so far have had the same type, our algorithm would never finish $|\A|\cdot |\varTheta|$ phases.}

Therefore Algorithm \ref{alg:public_main} gets reward $\OPT$ except rounds in the first $|\A|\cdot |\varTheta|$ phases.  It remains to prove that the expected number of rounds in exploration (i.e. first $|\A|\cdot |\varTheta|$ phases) does not depend on the time horizon $T$. Let $N_\ell$ be the duration of phase $\ell$.
Recall that the phase ends as soon as each type has shown up at least $L_{\theta}$ times. It follows that
$ \E[N_\ell] \leq  \sum_{\theta\in\varTheta} \frac{L_{\theta}}{\Pr[\theta]}$.
So, one can take $C = |\A|\cdot |\varTheta|\cdot \sum_{\theta\in\varTheta} \frac{L_{\theta}}{\Pr[\theta]}$.
\end{proof}



\subsection{Extension to Reported Types}
\label{sec:reported}

\newcommand{\pipub}{\pi_{\term{pub}}}

We sketch how to extend our ideas for public types to handle the case of reported types. We'd like to simulate the recommendation policy for public types, call it $\pipub$. We simulate it separately for the exploration part and the exploitation part. The exploitation part is fairly easy: we provide a menu that recommends the best explored action for each agent types.

In the exploration part, in each round $t$ we guess the agent type to be $\hat{\theta}_t$, with equal probability among all types.%
\footnote{We guess the types uniformly, rather than according to their probabilities, because our goal is to explore each type for certain number of rounds. Guessing a type according to its probability will only make rare types appear even rarer.}
The idea is to simulate $\pipub$ only in \emph{lucky rounds} when we guess correctly, \ie $\hat{\theta}_t=\theta_t$. Thus,
in each round $t$ we simulate the $\ell_t$-th round of $\pipub$, where $\ell_t$ is the number of lucky rounds before round $t$. In each round $t$ of exploration, we suggest the following menu. For type $\hat{\theta}_t$, we recommend the same action as $\pipub$ would recommend for this type in the $\ell_t$-th round, namely
    $\hat{a}_t = \pipub^{\ell_t}(\hat{\theta}_t)$.
For any other type, we recommend the action which has the best expected reward given the ``common knowledge" (information available before round $1$) and the action $\hat{a}_t$. This is to ensure that in a lucky round, the menu does not convey any information beyond action  $\hat{a}_t$. When we receive the reported type, we can check whether our guess was correct. If so, we input the type-action-reward triple back to $\pipub$. Else, we ignore this round, as if it never happened.

Thus, our recommendation policy eventually explores the same type-action pairs as $\pipub$. The expected number of rounds increases by the factor of $|\varTheta|$. Thus, we have the following theorem.

\begin{theorem}
\label{thm:reported}
Consider Bayesian Exploration with reported types.
There exists a BIC recommendation policy whose expected total reward is at least $\left(T - C \right) \cdot \OPTpub$,
for some constant $C$ that depends on the problem instance but not on $T$.
This policy explores all type-action pairs that are eventually-explorable
for public types.
\end{theorem}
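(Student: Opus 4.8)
The plan is to formalize the reduction sketched above: build a reported-types policy $\pi$ that runs $\pipub$ (the public-types policy of Theorem~\ref{thm:public}) inside a subsequence of ``lucky'' rounds, and then argue separately that $\pi$ is BIC, that it faithfully reproduces the exploration of $\pipub$, and that its non-exploiting prefix is short in expectation. The construction is: in each exploration round $t$ draw a guess $\hat{\theta}_t$ uniformly from $\varTheta$ and call the round \emph{lucky} if $\hat{\theta}_t=\theta_t$; let $\ell_t$ be the number of lucky rounds before $t$, set the menu entry for $\hat{\theta}_t$ to $\hat{a}_t=\pipub^{\ell_t}(\hat{\theta}_t)$, and fill each other entry $\theta'\neq\hat{\theta}_t$ with the action maximizing the expected reward of type $\theta'$ conditioned on common knowledge and $\hat{a}_t$; after the type is reported, feed the triple $(\theta_t,\hat{a}_t,r_t)$ into $\pipub$ only on lucky rounds and discard it otherwise.

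First I would establish, by induction on $t$, that the lucky rounds faithfully simulate $\pipub$: conditioned on the first $\ell$ lucky rounds, the joint law of (state, simulated history, recommendation to the guessed type) equals the law of a genuine public-types run of $\pipub$ at round $\ell$. Two facts drive this. Conditioned on lucky, $\theta_t$ is distributed as $\DT$, since $\Pr[\hat{\theta}_t=\theta_t=\theta]=\tfrac{1}{|\varTheta|}\Pr[\theta]$ gives $\Pr[\theta_t=\theta\mid\text{lucky}]=\Pr[\theta]$, exactly the type law $\pipub$ expects; and discarded rounds never touch the simulated history. Because faithfulness requires that lucky-round agents actually follow $\hat{a}_t$, this induction must be carried out jointly with the BIC claim below: the inductive hypothesis ``BIC and faithful simulation through round $t-1$'' licenses using $\pipub$'s posterior at round $t$.

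Proving the BIC property is where I expect the main difficulty. Fix a type $\theta$ and recommended action $a$, and decompose the incentive inequality~\eqref{eq:model-BIC} over whether the round is lucky for $\theta$ (i.e.\ $\hat{\theta}_t=\theta$). On the lucky event, conditioned further on $\ell_t=\ell$, the recommendation is $\pipub^{\ell}(\theta)$ and, by the simulation identity, the agent's posterior over $\omega$ matches the public-types posterior, so $\pipub$ being BIC yields $\E[u(\theta,a,\omega)-u(\theta,a',\omega)\mid\cdot]\ge 0$ for every $a'$. On the unlucky event, $a$ is by construction the expected-reward maximizer given common knowledge and the realized $\hat{a}_t$. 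The delicate point is that the agent conditions only on the \emph{coarsening} ``off-guess action $=a$'' rather than on $\hat{a}_t$ itself; but since $a$ is a best response for \emph{every} $\hat{a}_t$ in the conditioning set, averaging the inequality $\E[u(\theta,a,\omega)-u(\theta,a',\omega)\mid\hat{a}_t]\ge 0$ over that set preserves its sign, so $a$ stays a best response under the coarser conditioning. As $a$ is weakly preferred on both events, it is preferred unconditionally, giving~\eqref{eq:model-BIC}. The subtlety I must pin down is precisely which information the unlucky-case posterior conditions on: the ``no information beyond $\hat{a}_t$'' design of the off-guess entries is exactly what keeps this posterior a coarsening of the public-types signal rather than something strictly more informative, which is what makes the averaging argument valid.

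Finally I would assemble the two quantitative pieces. For exploration completeness: since lucky rounds reproduce $\pipub$ exactly and $\pipub$ explores every type-action pair eventually-explorable for public types (Theorem~\ref{thm:public}), $\pi$ explores the same set; hence after exploration the exploitation menu recommends the best explored action per type and each subsequent round earns $\OPTpub$. For the additive loss: each round is lucky with probability $\tfrac{1}{|\varTheta|}$, so the expected number of real rounds to complete the (instance-dependent, $T$-independent) $C_{\mathrm{pub}}$-round exploration of $\pipub$ is $|\varTheta|\cdot C_{\mathrm{pub}}$. Taking $C=|\varTheta|\cdot C_{\mathrm{pub}}$ then gives expected total reward at least $(T-C)\cdot\OPTpub$, and combining BIC, exploration completeness, and this bound proves the theorem.
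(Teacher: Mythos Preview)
Your construction and overall argument match the paper's proof sketch exactly: guess the type uniformly, simulate $\pipub$ on lucky rounds, fill off-guess menu entries with the best response given common knowledge and $\hat a_t$, and bound the exploration blowup by the factor $|\varTheta|$. The paper only sketches this reduction and does not spell out the BIC verification, so your lucky/unlucky decomposition and the simulation-faithfulness induction are natural elaborations of what the paper leaves implicit.
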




\section{Private Types}
\label{sec:private_nc}

Our recommendation policy for private types satisfies a relaxed version of the BIC property, called \emph{$\delta$-BIC}, where the right-hand side in \eqref{eq:model-BIC} is $-\delta$ for some fixed $\delta>0$. We assume a more permissive behavioral model in which agents obey such policy.

The main result is as follows. (Throughout, $\OPT = \OPTpri$.)

\begin{theorem}
\label{thm:private_nocc}
Consider Bayesian Exploration with private types, and fix $\delta > 0$. There exists a $\delta$-BIC recommendation policy with expected total reward at least $\left(T - C \log T \right) \cdot \OPT$, where $C$ depends on the problem instance but not on time horizon $T$.
\end{theorem}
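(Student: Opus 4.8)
The plan is to mirror the public-types argument (Theorem~\ref{thm:public}) at the level of \emph{menus} rather than type-action pairs, replacing exact information-monotonicity (Lemma~\ref{lem:infomono}) by an approximate version and paying for the approximation with a logarithmic number of exploration rounds. First I would redo the single-round analysis for menus. A menu $m:\varTheta\to\A$ is \emph{signal-explorable} at a realized signal $s$ if some BIC policy recommends $m$ with positive probability there; exactly as in the public case this is decided by a linear program whose variables are $x_{m,s}=\Pr[\pi(s)=m]$ and whose constraints encode the per-type menu incentive inequalities $\E[u(\theta,m(\theta),\omega)-u(\theta,a,\omega)\mid m,s]\ge 0$ for all $\theta,a$. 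Averaging the per-menu LP solutions gives a max-support policy over menus, and a MaxExplore-style subroutine (Algorithm~\ref{alg:public_explore}, applied to menus) then produces, in each phase, a sequence of menus that contains every currently signal-explorable menu while keeping the per-round marginal equal to the max-support policy.

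The crucial new difficulty, flagged in the introduction, is that recommending a menu $m$ and observing a single reward $r=u(\theta,m(\theta),\omega)$ does \emph{not} reveal $(\theta,m(\theta),\omega)$: the type is hidden, so one only sees a draw from the reward distribution $\{u(\theta,m(\theta),\omega)\}_{\theta\sim\DT}$ in the realized state. Hence the clean identity $I(H_t;\omega_0\mid S,\dots)=0$ used in the proof of Lemma~\ref{lem:exp_public} is no longer available. Instead I would prove an \emph{approximate information-monotonicity} lemma: if $I(S';\omega_0\mid S)\le\epsilon$, then every menu signal-explorable under $S'$ is recommended with positive probability by some $\delta(\epsilon)$-BIC policy under $S$, where $\delta(\epsilon)\to 0$ as $\epsilon\to 0$. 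The construction copies Lemma~\ref{lem:infomono} --- push the $S'$-policy through the conditional channel $\Pr[S'=s'\mid S=s]$ --- but now small conditional mutual information yields only \emph{approximate} conditional independence of $S'$ and $\omega_0$ given $S$. Since $I(S';\omega_0\mid S)$ is an expected KL divergence bounded by $\epsilon$, Pinsker's inequality bounds the associated total-variation distance by $O(\sqrt{\epsilon})$ on average, so the transported incentive inequalities hold only up to an additive error of order $\sqrt{\epsilon}$. Choosing $\epsilon$ small makes this error at most $\delta$, which is precisely why the resulting policy is only $\delta$-BIC rather than BIC.

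With these pieces in hand the main policy is again phase-based (Algorithm~\ref{alg:public_main} adapted to menus): in each phase run MaxExplore over menus, but recommend \emph{each} explorable menu order $\log T$ times so that, by Hoeffding/DKW concentration over the finitely many reward values, the empirical reward distribution of every explored menu is within $\epsilon$ of the truth with probability $1-1/\mathrm{poly}(T)$. These empirical distributions are the signal $S_\ell$ passed between phases, and accumulating them drives $I(S';\omega_0\mid S_\ell)$ below $\epsilon$ whenever a comparison policy $\pi$ uses only menus we have already explored. An induction over phases, identical in shape to Lemma~\ref{lem:exp_public} but invoking the approximate monotonicity lemma at each step, then shows that after at most $|\A|^{|\varTheta|}$ phases every eventually-explorable menu in $\MExp_{\omega}$ has been explored. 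Taking $\epsilon$ below the instance-dependent minimum separation between reward distributions of states that are distinguishable by explorable menus, we identify the realized state exactly (up to such equivalence) with probability $1-1/\mathrm{poly}(T)$; in the exploitation phase I would then recommend the explored menu maximizing the estimated mean $\sum_\theta\Pr[\theta]\,u(\theta,m(\theta),\omega)$, which is just the mean of its observed reward distribution, so this menu achieves exactly $\OPTpri$ per round outside the failure event, and the same transport argument makes it $\delta$-BIC.

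Finally I would account for regret. The total exploration length is (number of phases)$\times$(menus per phase)$\times$(order $\log T$ samples each)$\times$(expected rounds per sample, i.e.\ $O(1/\min_\theta\Pr[\theta])$), which is $C\log T$ for an instance-dependent constant $C$; the $1/\mathrm{poly}(T)$ failure event contributes only $O(1)$ expected lost reward across the $T$ rounds. Hence the expected total reward is at least $(T-C\log T)\,\OPT$. The main obstacle I expect is the approximate information-monotonicity lemma: making the passage from ``$I(S';\omega_0\mid S)\le\epsilon$'' to ``transported policy is $\delta$-BIC'' fully rigorous --- in particular controlling signal realizations of tiny probability and ensuring the Pinsker-type bound survives the averaging over $s$ --- and then jointly calibrating $\epsilon$, the per-menu sample count, and the failure probability so that one order-$\log T$ budget simultaneously secures $\delta$-BIC, exploration of all explorable menus, and exact-state identification for near-optimal exploitation.
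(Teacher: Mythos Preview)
Your plan is essentially the paper's own proof: approximate information-monotonicity via Pinsker (the paper's Lemma~\ref{lem:ainfomono}, with the quantitative choice $\epsilon=\delta^2/8$), a phase-based MaxExplore over menus with $O(\log T)$ repetitions per menu to estimate the reward distribution $D_m(\omega)$ (the paper's Lemma~\ref{lem:deltam}), and an induction over at most $|\M|$ phases where the per-round conditional mutual information is bounded using Fano's inequality on $H(D_m\mid \Delta_m)$. One small slip: in the private-types model each round already yields one sample of the recommended menu regardless of the agent's type, so the factor $O(1/\min_\theta\Pr[\theta])$ in your regret accounting should be dropped.
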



The recommendation policy proceeds in phases: in each phase, it explores all menus that can be explored given the information collected so far. The crucial step in the proof is to show that:

\begin{property}
\item the first $l$ phases of our recommendation policy explore all the menus that could be possibly explored by the first $l$ rounds of any BIC recommendation policy.
    \label{prop:private-exploration}
\end{property}

The new difficulty for private types comes from the fact that we are exploring menus instead of type-actions pairs, and we do not learn the reward of a particular type-action pair immediately. This is because a recommended menu may map several different types to the chosen action, so knowing the latter does not immediately reveal the agent's type. Moreover, the full ``outcome" of a particular menu is a distribution over action-reward pairs, it is, in general, impossible to learn this outcome exactly in any finite number of rounds.  Because of these issues, we cannot obtain Property \refprop{prop:private-exploration} exactly. Instead, we achieve an approximate version of this property, as long as we explore each menu enough times in each phase.

We then show that this approximate version of \refprop{prop:private-exploration}  suffices to guarantee explorability, if we relax the incentives property of our policy from BIC to $\delta$-BIC, for any fixed $\delta>0$. In particular, we prove an approximate version of the information-monotonicity lemma (Lemma~\ref{lem:infomono}) which (given the approximate version of \refprop{prop:private-exploration}) ensures that our recommendation policy can explore all the menus that could be possibly explored by the first $l$ rounds of any BIC recommendation policy.

\subsection{A Single round of Bayesian Exploration}
\label{sec:private_single}

Recall that for a random variable $S$, called \emph{signal}, the signal structure is a joint distribution of $(\omega,S)$.

\begin{definition}
Consider a single-round of Bayesian Exploration when the principal has signal $S$ from signal structure $\S$. For any $\delta \geq 0$, a menu $m \in \M$ is called $\delta$-signal-explorable, for a given signal $s$, if there exists a single-round $\delta$-BIC recommendation policy $\pi$ such that $\Pr[\pi(s) = m] > 0$. The set of all such menus is denoted as $\EX^{\delta}_s[\S]$. The $\delta$-signal-explorable set is defined as $\EX^{\delta}[\S] = \EX^{\delta}_S[\S]$. We omit $\delta$ in $\EX^{\delta}[\S]$ when $\delta = 0$.
\end{definition}

\xhdr{Approximate Information Monotonicity.}
In the following definition, we define a way to compare two signals approximately.
\begin{definition}
Let $S$ and $S'$ be two random variables. We say random variable $S$ is $\alpha$-approximately informative as random variable $S'$ about state $\omega_0$ if $I(S' ; \omega_0|S) = \alpha$.
\end{definition}

\begin{lemma}
\label{lem:ainfomono}
Let $S$ and $S'$ be two random variables and $\S$ and $\S'$ be their signal structures. If $S$ is $(\delta^2/8)$-approximately informative as $S'$ about state $\omega_0$ (i.e. $I(S' ; \omega_0|S) \leq \delta^2/8$), then $\EX_{s'}[\S'] \subseteq \EX^{\delta}_s[\S]$  for all $s' ,s$ such that $\Pr[S= s, S'= s'] > 0$.
\end{lemma}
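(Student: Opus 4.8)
The plan is to mirror the proof of Lemma~\ref{lem:infomono}, upgrading the exact conditional-independence argument to a quantitative one. Given any BIC policy $\pi'$ for the signal structure $\S'$ with $\Pr[\pi'(s')=m]>0$, I would define a single-round policy $\pi$ for $\S$ by the same averaging rule used there,
\[
\Pr[\pi(s)=m] \;=\; \textstyle\sum_{\tilde s'}\; \Pr[\pi'(\tilde s')=m]\cdot \Pr[S'=\tilde s'\mid S=s].
\]
Since $\Pr[S=s,S'=s']>0$ forces $\Pr[S'=s'\mid S=s]>0$, we immediately get $\Pr[\pi(s)=m]>0$. So it only remains to show that $\pi$ is $\delta$-BIC, which then gives $m\in \EX^{\delta}_s[\S]$.

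For the incentive guarantee I would reduce $\delta$-BIC of $\pi$ to a statement about posterior beliefs. Write $\mu'_m$ for the posterior over $\omega_0$ obtained by conditioning on $\{\pi'(S')=m\}$, and $\mu_m$ for the posterior obtained from $\{\pi(S)=m\}$. Because $\pi'$ is BIC, for every type $\theta$ and alternative menu $m'$ we have $\E_{\omega\sim \mu'_m}[\,u(\theta,m(\theta),\omega)-u(\theta,m'(\theta),\omega)\,]\ge 0$; since the bracketed quantity lies in $[-1,1]$, a bound of the form $\|\mu_m-\mu'_m\|_1\le \delta$ transfers this to $\E_{\omega\sim\mu_m}[\cdots]\ge -\delta$, which is precisely the $\delta$-BIC constraint. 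Thus the whole problem reduces to controlling the distortion of the recommendation-conditioned posterior caused by the averaging rule, which implicitly replaces the true joint law of $(\omega_0,S')$ given $S$ by the corresponding product law.

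The key step is to bound this distortion using the hypothesis $I(S';\omega_0\mid S)\le \delta^2/8$. The error is exactly the gap between the conditional joint law $P_{S',\omega_0\mid S}$ and the product $P_{S'\mid S}\otimes P_{\omega_0\mid S}$, and this gap is what the conditional mutual information measures. I would invoke Pinsker's inequality (Appendix~\ref{app:info-theory}): taking expectations over $S$ and using concavity of the square root,
\[
\E_{S}\!\left[\mathrm{TV}\!\left(P_{S',\omega_0\mid S},\;P_{S'\mid S}\otimes P_{\omega_0\mid S}\right)\right]
\;\le\; \sqrt{\tfrac12\, I(S';\omega_0\mid S)} \;\le\; \tfrac{\delta}{4}.
\]
Converting from total variation to the $\ell_1$-distance of the induced posteriors (a factor of $2$) yields an aggregate bound of $\delta/2$ on the posterior distortion, comfortably within the $\delta$ budget. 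The specific constant $\delta^2/8$ is chosen exactly so that, after the square-root from Pinsker and this factor of $2$, the slack lands at $\delta$.

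The hard part will be the normalization when passing from the aggregate information-theoretic bound to the per-menu posterior bound. Conditioning on the event $\{\pi(S)=m\}$ reweights the error by $1/\Pr[\pi(S)=m]$, so a menu recommended only rarely could, in principle, suffer a posterior shift much larger than the averaged bound suggests. Handling this carefully — exploiting that the relevant menus are explored with probability bounded away from zero, so that each per-menu slack stays within $\delta$ — is the delicate point of the argument, and it is exactly why the main recommendation policy for private types must explore each menu sufficiently often in every phase.
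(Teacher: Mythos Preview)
Your construction of $\pi$ from $\pi'$ and the appeal to Pinsker both match the paper. The divergence is in how you verify the incentive constraint: you condition on $\{\pi(S)=m\}$ and try to bound $\|\mu_m-\mu'_m\|_1$ per menu, and you correctly note that this incurs a $1/\Pr[\pi(S)=m]$ blowup. The fix you propose does not work here. The lemma is a standalone statement about arbitrary jointly distributed $(S,S',\omega_0)$; nothing in its hypotheses lower-bounds $\Pr[\pi(S)=m]$, and the policy $\pi$ in question is the one you just built from an \emph{arbitrary} BIC $\pi'$, not the algorithm's max-support policy. The ``explore each menu often'' mechanism in the main procedure serves a completely different purpose --- it makes the estimates $\Delta_m$ accurate so that $I(S';\omega_0\mid S)$ is small in Lemma~\ref{lem:exp_nocc} --- and has no bearing on this lemma.

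The paper sidesteps the normalization entirely by never conditioning on the recommendation. It works directly with the unnormalized sum
\[
\sum_{\omega,s}\Pr[\omega_0=\omega,S=s]\,\bigl(u(\theta,m(\theta),\omega)-u(\theta,m'(\theta),\omega)\bigr)\,\Pr[\pi(s)=m],
\]
substitutes $\Pr[\pi(s)=m]=\sum_{s'}\Pr[\pi'(s')=m]\,\Pr[S'=s'\mid S=s]$, and compares to the same sum with weights $\Pr[\omega_0=\omega,S=s,S'=s']$, which is $\ge 0$ since $\pi'$ is BIC. The discrepancy is controlled termwise by $\lvert u(\theta,m(\theta),\omega)-u(\theta,m'(\theta),\omega)\rvert\cdot\Pr[\pi'(s')=m]\le 1$ times $\lvert\Pr[\omega,s]\Pr[s'\mid s]-\Pr[\omega,s,s']\rvert$, so the total error is at most twice the aggregate $\ell_1$ gap between the conditional joint and the product of $(\omega_0,S')$ given $S$ --- i.e., at most $\delta$ by your own Pinsker calculation --- \emph{uniformly in $m$}. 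The whole point is to keep the factor $\Pr[\pi'(s')=m]\le 1$ inside the sum rather than dividing through by $\Pr[\pi(S)=m]$.
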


\begin{proof}
For each signal realization $s$, denote
\[ D_s =  \DKL\left(\; ((S',\omega_0) \mid S=s)\quad \|\quad  (S'|S=s) \times (\omega_0\mid S=s) \;\right).  \]
We have
$\sum_{s} \Pr[S=s] \cdot D = I(S' ; \omega_0|S) \leq \delta^2/8$.

By Pinsker's inequality, we have
\begin{align*}
       &\sum_{s} \Pr[S = s]\cdot \sum_{s', \omega} | \Pr[S' = s', \omega_0 = \omega| S= s]\\
       & \qquad \qquad\qquad\qquad- \Pr[S'=s'|S=s] \cdot \Pr[\omega_0 = \omega|S=s]| \\
&\qquad\leq\textstyle   \sum_{s} \Pr[S=s] \cdot \sqrt{2 \ln(2)\cdot D_s  } \\
&\qquad\leq\textstyle   \sqrt{2 \sum_{s} \Pr[S=s] \cdot  D_s }
\leq  \delta /2.
\end{align*}

Consider any BIC recommendation policy $\pi'$ for signal structure $\S'$. We construct $\pi$ for signature structure $\S$ by setting
\[ \textstyle \Pr[\pi(s) = m] = \sum_{s'} \Pr[\pi'(s') = m] \cdot Pr[S' = s'|S = s].\]

Now we check $\pi$ is $\delta$-BIC. For any $m,m' \in \M$ and $\theta \in \varTheta$,
\begin{align*}
& \textstyle \sum_{\omega,s}\;
    \Pr[\omega_0= \omega] \cdot \Pr[S = s | \omega_0 = \omega] \\
&\quad\qquad \cdot \left(u(\theta, m(\theta), \omega) - u(\theta, m'(\theta), \omega)\right)
\cdot  \Pr[\pi(s) = m] \\
&\quad=
    \textstyle \sum_{\omega,s,s'}\;
        \Pr[\omega_0 = \omega, S = s] \cdot \Pr[ S'=s'|S= s] \cdot \Pr[\pi'(s') = m]   \\
&\qquad\qquad\cdot \left(u(\theta, m(\theta), \omega) - u(\theta, m'(\theta), \omega)\right)\\
&\quad\geq \textstyle \sum_{\omega,s,s'}\;
 \Pr[\omega_0 = \omega, S = s, S'=s'] \cdot \Pr[\pi'(s') = m]   \\
&\qquad\qquad\qquad\cdot \left(u(\theta, m(\theta), \omega) - u(\theta, m'(\theta),
 \omega)\right)\\
&\qquad\qquad -2 \cdot \sum_{\omega,s,s'} | \Pr[\omega_0 = \omega, S = s] \cdot \Pr[ S'=s'|S= s]\\
&\qquad\qquad -  \Pr[\omega_0 = \omega, S = s, S'=s']| \\
&\quad= \sum_{\omega,s'} \Pr[\omega_0 = \omega, S'=s'] \cdot \Pr[\pi'(s') = m]  \\
&\qquad\qquad \cdot \left(u(\theta, m(\theta), \omega) - u(\theta, m'(\theta),
 \omega)\right)\\
&\qquad\qquad -2 \cdot \sum_{s} \Pr[S = s] \cdot  \sum_{s', \omega} | \Pr[S' = s', \omega_0 = \omega| S= s] \\
&\qquad\qquad- \Pr[S'=s'|S=s] \cdot \Pr[\omega_0 = \omega|S=s]| \\
&\quad\geq  0- 2 \cdot t\tfrac{\delta}{2} = ~-\delta.
\end{align*}

We also have for any $s', s ,m$ such that $\Pr[S' = s',S = s] >0 $ and $\Pr[\pi'(s') = m] >0$, we have $\Pr[\pi(s) = m] > 0$. This implies $\EX_{s'}[\S'] \subseteq \EX^{\delta}_s[\S]$.
\end{proof}

\xhdr{Max-Support Policy.}
We can solve the following LP to check whether a particular menu $m_0 \in\A$ is signal-explorable given a particular realized signal $s_0\in\X$. In this LP, we represent a policy $\pi$ as a set of numbers
    $x_{m,s} = \Pr[\pi(s)=m]$,
for each menu $m\in \M$ and each feasible signal $s\in \X$.

\begin{figure}[H]
\begin{mdframed}
\begin{alignat*}{2}
 & \textbf{maximize }    x_{m_0,s_0}\  \\
&  \textbf{subject to: }\\
 & \sum_{\omega \in \varOmega, s \in \X} \Pr[\omega] \cdot \Pr[s | \omega] &  & \cdot \left(u(\theta, m(\theta), \omega) - u(\theta, m'(\theta), \omega) + \delta\right)\\
    &\qquad\qquad \cdot x_{m,s'} \geq 0  &\ & \forall m,m' \in \M, \theta \in \varTheta \\
& \textstyle  \sum_{m\in \M}\; x_{m,s} = 1,  \ &\ & \forall s \in \X \\
& x_{m,s} \geq 0,  \ &\ & \forall s \in \X, m\in \M
\end{alignat*}
\end{mdframed}
\label{fig:nocc_lp}
\end{figure}

Since the constraints in this LP characterize any $\delta$-BIC recommendation policy, it follows that menu $m_0$ is $\delta$-signal-explorable given realized signal $s_0$ if and only if the LP has a positive solution. If such solution exists, define recommendation policy $\pi = \pi^{m_0,s_0}$ by setting $\Pr[\pi(s) = m] = x_{m,s}$ for all $m \in \M, s \in \X$. Then this is a $\delta$-BIC recommendation policy such that $\Pr[\pi(s_0) = m_0] > 0$.

\begin{definition}
Given a signal structure $\S$, a recommendation policy $\pi$ is called the $\delta$-max-support policy if $\forall s \in \X$  and $\delta$-signal-explorable menu $m\in \M$ given $s$, $\Pr[\pi(s) = m] > 0$.
\end{definition}

It is easy to see that we obtain $\delta$-max-support recommendation policy by averaging the $\pi^{m,s}$ policies define above.
Specifically, the following policy is a $\delta$-BIC and $\delta$-max-support policy.
\begin{align}
\label{eq:pimax2}
\pi^{max} = \frac{1}{|\X|} \sum_{s \in \X} \frac{1}{|\EX_s^{\delta}[\S]|} \sum_{m \in \EX_s^{\delta}[\S]} \pi^{m,s}.
\end{align}

\xhdr{Maximal Exploration.}
Let us design a subroutine, called  MaxExplore, which outputs a sequence of $L$ menus. We are going to assume $L \geq \max_{m,s} \frac{B_m(\gamma_0)}{ \Pr[\pi^{max}(s)=m]}$. $\gamma_0$ is defined in Algorithm \ref{alg:nocc_main} of Section \ref{sec:private_main}. $B_m$ is defined in Lemma \ref{lem:deltam}.

The goal of this subroutine MaxExplore is to make sure that for any signal-explorable menu $m$, $m$ shows up at least $B_m(\gamma_0)$ times in the sequence with probability exactly 1. On the other hand, we want that the menu of each specific location in the sequence has marginal distribution same as $\pi^{max}$.

 \begin{algorithm}[H]
    \caption{Subroutine MaxExplore}
    	\label{alg:nocc_explore}
    \begin{algorithmic}[1]
	\STATE \textbf{Input:} signal $S$, signal structure $\S$.
	\STATE \textbf{Output:} a list of menus $\mu$
	\STATE Compute $\pi^{max}$ as per \eqref{eq:pimax2}.
		\STATE Initialize $Res = L$.
		\FOR {each menu $ m \in \M$}
			\STATE $C_m \leftarrow L \cdot \Pr[\pi^{max}(S) = m]$.
                     		\STATE Add $\lfloor C_m\rfloor$ copies of menu $m$ into list $\mu$.
			\STATE $Res \leftarrow Res -\lfloor C_m \rfloor $.
			\STATE $p^{Res}(m)\leftarrow  C_m -  \lfloor C_m\rfloor$
		\ENDFOR
		\STATE $p^{Res}(m) \leftarrow p^{Res}(m) / Res$, $\forall m \in \M$.
		\STATE Sample $Res$ many menus from distribution $p^{Res}$ independently and add these menus into $\mu$.
		\STATE Randomly permute the menus in $\mu$.
	\RETURN $\mu$.	
     \end{algorithmic}
\end{algorithm}

Similarly as the MaxExplore in Section \ref{sec:public}, we have the following:
\begin{claim}
\label{clm:maxexplore_nocc}
Given realized signal $S$, MaxExplore outputs a sequence of $L$ menus. Each menu in the sequence marginally distributed as $\pi^{max}$. For any menu $m$ such that $\Pr[\pi^{max} = m] >0$, $m$ shows up in the sequence at least $B_m(\gamma_0)$ times with probability exactly 1. MaxExplore runs in time polynomial in $L$, $|\M|$, $|\varOmega|$, $|\X|$ (size of the support of the signal).
\end{claim}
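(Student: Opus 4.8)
The plan is to follow the proof of Claim~\ref{clm:maxexplore} essentially verbatim, since Algorithm~\ref{alg:nocc_explore} is the menu-valued analogue of Algorithm~\ref{alg:public_explore}; the only genuinely new point is the stronger multiplicity guarantee ``at least $B_m(\gamma_0)$ times'' in place of ``at least once''. Four assertions must be verified: the output length is $L$, each coordinate is marginally $\pi^{max}$, every positive-probability menu appears $\ge B_m(\gamma_0)$ times almost surely, and the running time is polynomial.

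First I would check the bookkeeping. Because $\sum_{m\in\M}\Pr[\pi^{max}(S)=m]=1$ we have $\sum_m C_m = L$, hence $Res = L-\sum_m\lfloor C_m\rfloor = \sum_m (C_m-\lfloor C_m\rfloor)\ge 0$, so after the normalization step the residual weights $p^{Res}$ form a genuine probability distribution (when $Res=0$ no residual sampling occurs). The list $\mu$ therefore consists of $\sum_m\lfloor C_m\rfloor$ deterministic copies together with exactly $Res$ i.i.d.\ residual draws, for a total length of exactly $L$.

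For the marginal law I would exploit the final uniform permutation: it makes the coordinates of $\mu$ exchangeable, so $\Pr[\mu_i=m]=\E[\text{count of }m\text{ in }\mu]/L$ for every $i$. The expected count of $m$ is $\lfloor C_m\rfloor + Res\cdot p^{Res}(m)=\lfloor C_m\rfloor+(C_m-\lfloor C_m\rfloor)=C_m$, whence $\Pr[\mu_i=m]=C_m/L=\Pr[\pi^{max}(S)=m]$, i.e.\ each coordinate is distributed as $\pi^{max}$.

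The multiplicity guarantee is the step I expect to require the most care, as it is what forces the lower bound on $L$. Invoking the standing assumption $L\ge\max_{m,s} B_m(\gamma_0)/\Pr[\pi^{max}(s)=m]$, any menu $m$ with $\Pr[\pi^{max}(S)=m]>0$ satisfies $C_m=L\cdot\Pr[\pi^{max}(S)=m]\ge B_m(\gamma_0)$; since $B_m(\gamma_0)$ is a nonnegative integer (as defined in Lemma~\ref{lem:deltam}) this gives $\lfloor C_m\rfloor\ge B_m(\gamma_0)$, so $m$ is inserted at least $B_m(\gamma_0)$ times during the deterministic loop, before any randomness is drawn, and hence appears at least that often with probability one. (Were $B_m(\gamma_0)$ non-integral, one would replace it by $\lceil B_m(\gamma_0)\rceil$ and rescale $L$ correspondingly.) Finally, the running time is dominated by forming $\pi^{max}$ through~\eqref{eq:pimax2}, which amounts to solving the LP of this section once per pair $(m,s)\in\M\times\X$ to determine the $\delta$-signal-explorable sets; each LP has polynomially many variables and constraints in $|\M|,|\varOmega|,|\X|$, and the remaining loop, the at most $L$ residual samples, and the permutation of $L$ items all cost $\mathrm{poly}(L,|\M|,|\X|)$, yielding the stated polynomial bound.
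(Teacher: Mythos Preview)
Your proposal is correct and follows the same approach the paper intends: the paper gives no explicit proof of Claim~\ref{clm:maxexplore_nocc}, merely pointing to the public-types analogue (Claim~\ref{clm:maxexplore}) with the remark ``Similarly as the MaxExplore in Section~\ref{sec:public}''. You have spelled out precisely those details---the length-$L$ bookkeeping, the exchangeability argument for the marginal, the deterministic-copy count via the lower bound on $L$, and the LP-based runtime---and your one additional observation (that the multiplicity guarantee now requires $\lfloor C_m\rfloor\ge B_m(\gamma_0)$ rather than merely $\lfloor C_m\rfloor\ge 1$, which is exactly what the assumption $L\ge B_m(\gamma_0)/\Pr[\pi^{max}(s)=m]$ buys) is the only substantive change from the public-types case.
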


\xhdr{Menu Exploration.}
If an agent in a given round follows a given menu $m$,  an action-reward pair is revealed to the algorithm after the round. Such action-reward pair is called a \emph{sample} of the menu $m$. Let $D_m(\omega)$ denote the distribution of this action-reward pair for a fixed state $\omega$
(with randomness coming from the agent arrivals).

We compute an estimate $\Delta_m$ of $D_m(\omega_0)$. This estimate is a \emph{triple-list}: an explicit list of (action, reward, positive probability) triples.

\begin{lemma}
\label{lem:deltam}
For any $\gamma > 0$, we can compute a triple-list $\Delta_m$ which is a function of
    $B_m(\gamma) = O\left(\ln 1/\gamma \right)$
samples of menu $m$ such that
\[
\forall \omega\in\varOmega \quad
\Pr[\Delta_m \neq D_m(\omega) \mid \omega_0 = \omega] \leq \gamma.
\]
\end{lemma}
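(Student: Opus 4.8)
The plan is to exploit the discreteness of the problem. Since $\varOmega$ is finite and the reward function $u$ is deterministic, each sample of menu $m$ is an (action, reward) pair $(m(\theta),\, u(\theta, m(\theta), \omega_0))$ with $\theta \sim \DT$, so $D_m(\omega)$ is a distribution supported on a fixed finite set $F_m \subseteq \A \times [0,1]$ of pairs, fully determined by the instance. As $\omega$ ranges over $\varOmega$ we obtain a finite family of candidate distributions $\mathcal{D}_m = \{ D_m(\omega) : \omega \in \varOmega \}$, all of which the principal can compute from $u$, $\DT$, and $\varOmega$. Finiteness of this family gives a uniform separation between distinct candidates: let
\[
\eta_m = \min \{ \| D_m(\omega) - D_m(\omega') \|_{\mathrm{TV}} : \omega, \omega' \in \varOmega,\ D_m(\omega) \neq D_m(\omega') \},
\]
and note $\eta_m > 0$ (if all $D_m(\omega)$ coincide, there is nothing to estimate and we output the common distribution).

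First I would define $\Delta_m$ by a nearest-candidate decoding rule. Draw $n = B_m(\gamma)$ i.i.d.\ samples of menu $m$ (they are i.i.d.\ because agent types are drawn independently and $\omega_0$ is fixed), form the empirical distribution $\hat D$ over $F_m$, and output the candidate $\Delta_m \in \mathcal{D}_m$ minimizing $\| \hat D - \Delta_m \|_{\mathrm{TV}}$, breaking ties arbitrarily. Because $\Delta_m$ is by construction exactly one of the distributions $D_m(\omega)$, returning the correct candidate means $\Delta_m = D_m(\omega_0)$ exactly --- including its support and its exact probabilities --- which is precisely what the triple-list must record.

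For the analysis I would bound the probability that $\hat D$ is far from the truth. Writing $D^\ast = D_m(\omega_0)$ and $k = |F_m|$, Hoeffding's inequality applied coordinatewise together with a union bound over the $k$ pairs shows every coordinate of $\hat D$ is within $\varepsilon$ of that of $D^\ast$ with probability at least $1 - 2k \exp(-2n\varepsilon^2)$, whence $\| \hat D - D^\ast \|_{\mathrm{TV}} \leq \tfrac{k}{2}\varepsilon$ on this event. Taking $\varepsilon = \eta_m/k$ forces $\| \hat D - D^\ast \|_{\mathrm{TV}} < \eta_m/2$; by the triangle inequality every other candidate $D' \neq D^\ast$ then obeys $\| \hat D - D' \|_{\mathrm{TV}} \geq \eta_m - \eta_m/2 > \| \hat D - D^\ast \|_{\mathrm{TV}}$, so the decoder returns $D^\ast$. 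The failure probability is at most $2k \exp(-2n\eta_m^2/k^2)$, which is $\leq \gamma$ once
\[
n \;\geq\; B_m(\gamma) := \frac{k^2}{2\eta_m^2}\,\ln\frac{2k}{\gamma} = O\!\left(\ln \tfrac1\gamma\right),
\]
where the hidden constant depends on $k$ and $\eta_m$ (hence on $m$ and the instance) but not on $\gamma$.

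The main conceptual point --- and the only place any care is needed --- is justifying \emph{exact} recovery rather than mere approximation, since a priori one can only estimate a distribution up to sampling error. The resolution is exactly the finiteness of $\varOmega$, which makes the set of attainable distributions discrete and therefore uniformly separated by $\eta_m$; this converts ``learn $D_m(\omega_0)$ exactly'' into ``distinguish among finitely many well-separated hypotheses,'' which needs only $O(\ln 1/\gamma)$ samples. The sole case requiring separate (trivial) handling is $|\mathcal{D}_m| = 1$, where no samples are needed at all.
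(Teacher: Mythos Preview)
Your proposal is correct and follows essentially the same approach as the paper's proof: both exploit finiteness of $\varOmega$ to reduce to a finite hypothesis class with a strictly positive separation parameter, then use coordinatewise Hoeffding/Chernoff bounds with a union bound over the finite support to guarantee that the empirical distribution identifies the correct candidate. The only cosmetic difference is that the paper measures separation in $\ell_\infty$ (some coordinate differs by more than $\delta_m$) and decodes by an $\ell_\infty$-ball test, whereas you use total variation and nearest-candidate decoding; this costs you an extra factor of $k$ in the constant but is otherwise the same argument.
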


\begin{proof}
Let $U$ be the union of the support of $D_m(\omega)$ for all $\omega \in \varOmega$. For each $u \in U$ ($u$ is just a sample of the menu), define
    \[ q(u,\omega) = \Pr_{v \sim D_m(\omega)}[v = u].\]
Let $\delta_m$ be small enough such that for all $\omega, \omega'$ with $D_m(\omega) \neq D_m(\omega')$, there exists $u \in U$, such that $|q(u,\omega) - q(u,\omega')| > \delta_m$.

Now we compute $\Delta_m$ as follows: Take $B_m(\gamma) = \frac{2}{\delta_m^2}\ln\left(\frac{2|U|}{\gamma}\right) $ samples and set $\hat{q}(u)$ as the empirical frequency of seeing $u$. And set $\Delta_m$ to be some $D_m(\omega)$ such that for all $u \in U$, $|q(u,\omega) - \hat{q}(u)| \leq \delta_m / 2$. Notice that if such $\omega$ exists, $\Delta_m$ will be unique. If no $\omega$ satisfies this, just pick $\Delta_m$ to be an arbitrary $D_m(\omega)$.

Now let's analyze $\Pr[\Delta_m \neq D_m(\omega)]$. Let's fix the state $\omega_0 = \omega$. By Chernoff bound, for each $u \in U$,
\[
\Pr[|q(u,\omega) -\hat{q}(u)| > \delta_m/2] \leq 2\exp\left(-2 \cdot
    (\delta_m/2)^2
    \cdot B_m(\gamma)\right) \leq \gamma/|U|.
\]
By union bound, with probability at least $1-\gamma$, we have for all $u \in U$, $|q(u,\omega) - \hat{q}(u)| \leq \delta_m / 2$. This implies $\Delta_m = D_m(\omega)$.
\end{proof}

\subsection{Main Recommendation Policy}
\label{sec:private_main}
In this subsection, we develop our main recommendation policy, Algorithm \ref{alg:nocc_main} (see pseudo-code), which explores all the eventually-explorable menus and then recommends the agents the best menu given all history. We pick $L$ to be at least
\[ \max_{m,s:\Pr[\pi(s)=m] >0} \frac{B_m(\gamma_0)}{ \Pr[\pi(s)=m]}\]
for all $\pi$ that might be chosen as $\pi^{max}$ by Algorithm \ref{alg:nocc_main}.

 \begin{algorithm}[t]
    \caption{Main procedure for private types }
    	\label{alg:nocc_main}
    \begin{algorithmic}[1]
    	\STATE {\bf Initialize:} signal $S_1 = \S_1= \perp$, phase $l=1$.
        \STATE \COMMENT{ $S_l$ and $\S_l$ are the signal and signal structure in phase $l$. }
    	\STATE Set $\gamma_1 =\min\left(\frac{\delta^2}{16|\M|\log(|\varOmega|)},\left( \frac{\delta^2}{32|M|}\right)^2\right)$ and $\gamma_2 =  \frac{1}{T|\M|}$.
    \STATE Set $\gamma_0=\min(\gamma_1,\gamma_2)$.
	\FOR {rounds $t=1$ to $T$}
		\IF {phase $l\leq |\M|$}
		\STATE \COMMENT{\textbf{Exploration}}
		\IF {$t \equiv 1 \pmod L$}
			\STATE Start a new phase:
            \STATE $\mu \leftarrow $ MaxExplore($S_l, \S_l$)
                \TAB\COMMENT{compute a list of $L$ menus}
		\ENDIF
		\STATE Suggest menu $\mu [ (t-1) \mod L + 1]$ to the agent.
		\IF {$t \equiv 0 \pmod L$}
			\STATE End of a phase:
			\FOR{each explored menu $m$ in the previous phase}
            \STATE use $B_m(\gamma_1)$ samples to compute $\Delta_m$ from Lemma \ref{lem:deltam}
            \ENDFOR
			\STATE {\bf If} no state $\omega\in \varOmega$ is consistent with $\Delta_m$ 
            (\ie $\Delta_m = D_m(\omega)$) for all explored menus $m$ {\bf then} \\
            \STATE \TAB pick any state $\omega$, and 
                set $\Delta_m \leftarrow D_m(\omega)$ for all explored menus $m$. \\
            \COMMENT{to ensure that \#signals is bounded by $|\varOmega|$.}
			\STATE $l \leftarrow l + 1$.
			\STATE Set $S_l = \{ \text{ $\Delta_m$:\; all explored menus $m$ } \}$.
            \STATE Set $\S_l$ to be the signal structure of $S_l$.
		\ENDIF
	\ELSE
		\STATE \COMMENT{\textbf{Exploitation}}
		\IF {this is the first exploitation round}
        \FOR {each menu $m$ explored during exploration}
            \STATE use $B_m(\gamma_2)$ samples to compute $\Delta_m$ from Lemma \ref{lem:deltam}.
        \ENDFOR
        \STATE Set $S_l = \{ \text{ $\Delta_m$:\; all explored menus $m$ } \}$.
        \ENDIF
		\STATE Suggest the menu which consists of the best action of each type conditioned on $S_l$ and the prior.
	\ENDIF
	\ENDFOR
     \end{algorithmic}
\end{algorithm}

It is easy to check by Claim \ref{clm:maxexplore_nocc} that for each agent, it is $\delta$-BIC to follow the recommended action if previous agents all follow the recommended actions. Therefore we have the following claim.
\begin{claim}
\label{clm:nocc_BIC}
Algorithm \ref{alg:nocc_main} is $\delta$-BIC.
\end{claim}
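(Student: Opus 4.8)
The plan is to verify the $\delta$-BIC inequality \eqref{eq:model-BIC} (with right-hand side $-\delta$) round by round, splitting on whether round $t$ lies in the exploration phases ($l\le|\M|$) or the exploitation phases ($l>|\M|$), and conditioning throughout on the event $\EE_t$ that every earlier agent obeyed her recommendation. For a fixed round $t$ and a recommendable menu $m$, I would form the agent's posterior $\Pr[\omega_0=\omega\mid m_t=m,\EE_t]$ and compare the Bayesian-expected reward of following $m$ to that of any deviating menu $m'$, for every type $\theta$.

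First I would treat the exploration rounds, which carry the real content. Here $m_t$ is the menu at position $(t-1)\bmod L+1$ of the list returned by MaxExplore$(S_l,\S_l)$, and by Claim~\ref{clm:maxexplore_nocc} the menu at every position is marginally distributed as $\pi^{max}$ from \eqref{eq:pimax2}, computed for the signal structure $\S_l$. The key observation is that the agent sees only her own recommendation, so her position in the phase and the menus of other agents are irrelevant to her inference: conditioning on $\EE_t$ guarantees the samples feeding each $\Delta_m$ (Lemma~\ref{lem:deltam}) are genuine, so $\S_l$ is exactly the joint law of $(\omega_0,S_l)$, and therefore $\Pr[\omega_0=\omega\mid m_t=m,\EE_t]\propto\sum_s\Pr_{\S_l}[\omega_0=\omega,S_l=s]\,\Pr[\pi^{max}(s)=m]$. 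Dividing the $\delta$-BIC constraint built into the LP that defines $\pi^{max}$ (the constraint carrying the $+\delta$ slack) by $\Pr[m_t=m\mid\EE_t]>0$ is then exactly the assertion that $\E[u(\theta,m(\theta),\omega_0)-u(\theta,m'(\theta),\omega_0)\mid m_t=m,\EE_t]\ge-\delta$ for all $\theta,m'$, which is \eqref{eq:model-BIC}.

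For an exploitation round the recommendation is deterministic given $S_l$, namely the menu $m$ whose $\theta$-component is the action maximizing type $\theta$'s expected reward conditioned on $S_l$ and the prior. Conditioning on $m_t=m$ amounts to conditioning on the signals that produce $m$; for every such signal $s$ the action $m(\theta)$ is Bayes-optimal for $\theta$, so the pointwise inequalities $\E[u(\theta,m(\theta),\omega_0)\mid S_l=s]\ge\E[u(\theta,a,\omega_0)\mid S_l=s]$ hold, and averaging them against the posterior over those signals shows $m(\theta)$ stays optimal under $\Pr[\omega_0\mid m_t=m,\EE_t]$. Hence exploitation rounds are even $0$-BIC.

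The hard part will be the posterior-matching step in the exploration case: one must argue that conditioning on $\EE_t$ aligns the realized joint law of $(\omega_0,S_l)$ with the structure $\S_l$ used to solve the LP and define $\pi^{max}$, in the presence of two complications. First, MaxExplore permutes the phase and so induces correlations across positions; this never enters a single agent's posterior because only the per-position marginal (Claim~\ref{clm:maxexplore_nocc}) is invoked. Second, the algorithm's correction step re-labels signals inconsistent with every state so as to keep the support of $S_l$ at size at most $|\varOmega|$; this is harmless because the re-labeling is part of the definition of $S_l$ and is thus already baked into $\S_l$. Once these points are in place the argument is genuinely short, and I note that only the $\delta$-BIC property of $\pi^{max}$ is used — its $\delta$-max-support property is irrelevant to incentive compatibility.
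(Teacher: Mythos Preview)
Your proposal is correct and follows the same approach as the paper, which simply points to Claim~\ref{clm:maxexplore_nocc} (the per-position marginal is $\pi^{\max}$, hence $\delta$-BIC) and declares the rest ``easy to check.'' You have unpacked this considerably more than the paper does --- in particular, the paper does not separately discuss the exploitation rounds or the two subtleties you flag (permutation correlations and the re-labeling step) --- so your argument is strictly more detailed than the original, but the core idea is identical.
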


\begin{lemma}
\label{lem:exp_nocc}
For any $l > 0$, assume Algorithm \ref{alg:nocc_main} has at least $\min(l, |\M|)$ phases.
For a given state $\omega$, if a menu $m$ can be explored by a BIC recommendation policy $\pi$ at round $l$ (i.e. $ \Pr[\pi^l= m]> 0$), then such menu is guaranteed to be explored $B_m$ times by Algorithm \ref{alg:nocc_main} by the end of phase $\min(l, |\M|)$.
\end{lemma}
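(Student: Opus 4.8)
Lemma \ref{lem:exp_nocc} restated and analyzed.

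The statement is the private-types analog of Lemma \ref{lem:exp_public}: the first $l$ phases of Algorithm \ref{alg:nocc_main} explore every menu explorable by the first $l$ rounds of any BIC policy, now measured in the $\delta$-BIC sense and with the approximate information-monotonicity lemma replacing the exact one. Let me sketch the induction.

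The plan is to prove the claim by induction on $l \le |\M|$, paralleling the public-types argument but routing all the information through the triple-list estimates $\Delta_m$ rather than exact rewards. The base case $l=1$ is immediate from Claim \ref{clm:maxexplore_nocc}: with the trivial signal, $\pi^{\max}$ puts positive weight on every $\delta$-signal-explorable menu, so MaxExplore lists each such menu at least $B_m(\gamma_0)$ times. For the inductive step, I would let $S = S_l$ be Algorithm \ref{alg:nocc_main}'s signal at the end of phase $l-1$, and let $S'$ be the full history of the competing BIC policy $\pi$ in its first $l-1$ rounds (its internal randomness $R$ together with the round-by-round action-reward observations).

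The heart of the argument is to bound $I(S';\omega_0 \mid S)$, now by $\delta^2/8$ rather than by $0$. Following the public-types decomposition via the chain rule, each per-round term reduces to the mutual information between $\omega_0$ and the action-reward pair observed by $\pi$ in that round, conditioned on $S$ and the past. The key difference is that $S$ no longer determines the round outcome exactly: it only contains the estimates $\Delta_m$, which equal the true menu-outcome distribution $D_m(\omega)$ except on a failure event of probability at most $\gamma_1$ per menu (Lemma \ref{lem:deltam}). So the induction hypothesis now says that, conditioned on no estimation failure, the outcome of each previously-explored menu is a deterministic function of $S$; the residual mutual information comes only from the failure events. I would bound this residual by $O(|\M|\,\gamma_1 \log|\varOmega|)$ using that conditional mutual information is controlled by the probability of the bad event times the maximum entropy $\log|\varOmega|$, and then invoke the choice $\gamma_1 \le \tfrac{\delta^2}{16|\M|\log|\varOmega|}$ precisely to force $I(S';\omega_0\mid S)\le \delta^2/8$. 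With that bound in hand, approximate information-monotonicity (Lemma \ref{lem:ainfomono}) gives $\EX_{s'}[\S'] \subseteq \EX^{\delta}_s[\S]$ on the relevant realizations, so any menu $m$ explorable by $\pi$ at round $l$ is $\delta$-signal-explorable for Algorithm \ref{alg:nocc_main}, hence listed at least $B_m(\gamma_0)$ times by MaxExplore (Claim \ref{clm:maxexplore_nocc}). Finally, the extension past $l > |\M|$ follows the same saturation argument as in Lemma \ref{lem:exp_public}: there are only $|\M|$ menus, so after a non-exploring phase the signal stabilizes and no new menu becomes explorable.

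The main obstacle I anticipate is the information-theoretic bookkeeping in the inductive step — specifically, making rigorous the claim that the previously-observed outcomes are a deterministic function of $S$ \emph{up to} the $\gamma_1$-probability failure event, and then correctly propagating that small failure probability through the chain-rule decomposition into an additive $\delta^2/8$ bound on $I(S';\omega_0\mid S)$. Unlike the public case, where the induction hypothesis gave an exact deterministic dependence and each mutual-information term was exactly zero, here every term carries a small error that must be summed over all $l-1 \le |\M|$ rounds and all $|\M|$ menus without the total exceeding the Pinsker-driven threshold of Lemma \ref{lem:ainfomono}. Getting the constants to line up with the definition of $\gamma_1$ is where the real care is required.
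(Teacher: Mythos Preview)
Your high-level plan matches the paper's proof exactly: induction on $l\le |\M|$, chain-rule decomposition of $I(S';\omega_0\mid S)$, a per-round bound summed over $l-1\le |\M|$ terms, then Lemma~\ref{lem:ainfomono} and Claim~\ref{clm:maxexplore_nocc}, with the $l>|\M|$ case handled by the saturation argument from Lemma~\ref{lem:exp_public}. Two points in the key step need correction, though.

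First, the conceptual framing ``conditioned on no estimation failure, the outcome of each previously-explored menu is a deterministic function of $S$'' is not right. The action-reward pair $(A_t,u(\Theta_t,M_t(\Theta_t),\omega_0))$ is a fresh random draw from $D_{M_t}(\omega_0)$ and is never determined by $S$. What the paper actually uses is that this pair is conditionally independent of $\omega_0$ given the \emph{distribution} $D_{M_t}$, so the per-round term is bounded by $I(D_{M_t};\omega_0\mid S,\ldots)$; and it is $D_{M_t}$, not the sample, that is (approximately) encoded in $S$ via $\Delta_{M_t}$. This extra step---passing from the sample to its distribution---is where the paper bounds $I(H_t;\omega_0\mid\cdots)\le H(D_m\mid \Delta_m)$.

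Second, your residual bound $O(|\M|\,\gamma_1\log|\varOmega|)$ is too optimistic. Bounding $H(D_m\mid\Delta_m)$ via Fano's inequality gives $H(\gamma_1)+\gamma_1\log(|\varOmega|-1)$, and the binary-entropy term $H(\gamma_1)$ dominates for small $\gamma_1$; it is not $O(\gamma_1\log|\varOmega|)$. The paper controls it by $H(\gamma_1)\le 2\sqrt{\gamma_1}$, which is precisely why $\gamma_1$ carries the \emph{second} constraint $\gamma_1\le(\delta^2/(32|\M|))^2$ in Algorithm~\ref{alg:nocc_main}. Invoking only the first constraint $\gamma_1\le \delta^2/(16|\M|\log|\varOmega|)$, as you do, would not close the argument.
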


\begin{proof}
The proof is similar to Lemma \ref{lem:exp_public}. We prove by induction on $l$ for $l \leq |\M|$.


Let $S$ be the signal of Algorithm \ref{alg:nocc_main} in phase $l$. Let $S'$ be the history of $\pi$ in the first $l-1$ rounds. More precisely, $S' = R, H_1,...,H_{l-1}$. Here $R$ is the internal randomness of $\pi$ and
\[ H_t = (M_t, A_t,u(\Theta_t, M_t(\Theta_t), \omega_0))\]
is the menu and the action-reward pair in round $t$ of $\pi$.

Let $\M'$ to be the set of menus explored in the first $l-1$ phases of Algorithm \ref{alg:nocc_main}. By the induction hypothesis, we have $\forall t\in[l-1]$, $M_t \subseteq \M'$. Then:
\begin{align*}
I(S'; \omega_0| S) 
    &= I(R,H_1,...,H_{l-1}; \omega_0| S)  \\
    &= I(R; \omega_0| S) + I(H_1,...,H_{l-1}; \omega_0|S, R) \\
    &= I(H_1,...,H_{l-1}; \omega_0|S, R).
\end{align*}

By the chain rule of mutual information, we have
\begin{align*}
I(H_1,...,H_{l-1}; \omega_0|S, R) 
 = I(H_1;\omega_0|S,R) + I(H_2;\omega_0|S, R ,H_1) + \cdots + I(H_{l-1}; \omega_0|S,R,H_1,...,H_{l-2}).
\end{align*}

For all $t \in [l-1]$, we have
\begin{align*}
I(H_t; \omega_0 \mid S,R,H_1,...,H_{t-1}) 
&= I(M_t,A_t, u(\Theta_t, M_t(\Theta_t), \omega_0); \omega_0 \mid S,R,H_1,...,H_{t-1}) \\
&= I(A_t, u(\Theta_t, M_t(\Theta_t), \omega_0); \omega_0  \mid  S,R,H_1,...,H_{t-1}, M_t)\\
&\leq I(D_{M_t}; \omega_0 \mid S,R,H_1,...,H_{t-1},M_t).
\end{align*}
The second last step comes from the fact that $M_t$ is a deterministic function of $R,H_1,...,H_{t-1}$. The last step comes from the fact that $(A_t,u(\Theta_t, M_t(\Theta_t), \omega_0))$ is independent with $\omega_0$ given $D_{M_t}$.

Then we have
\begin{align*}
I(D_{M_t}; \omega_0 \mid S,R,H_1,...,H_{t-1},M_t)
&= \textstyle   \sum_{m \in \M'}\; \Pr[M_t = m] \cdot I(D_m;\omega_0  \mid  S,R,H_1,...,H_{t-1},M_t = m)\\
&\leq \textstyle  \sum_{m \in M'}\; \Pr[M_t = m] \cdot I(D_m;\omega_0 \mid  \Delta_m, M_t =m).\\
&\leq \textstyle  \sum_{m \in M'}\; \Pr[M_t = m] \cdot H(D_m \mid  \Delta_m, M_t =m).
\end{align*}
The last step comes from the fact that
\[ I(D_m; (S\backslash \Delta_m),R,H_1,...,H_{t-1} \mid \omega_0, \Delta_m, M_t =m) = 0.\]
By Lemma \ref{lem:deltam}, we know that $\Pr[D_m \neq \Delta_m \mid M_t = m] \leq \gamma_1$. By Fano's inequality, we have
\begin{align*}
&H(D_m \mid  \Delta_m, M_t =m) \leq H(\gamma_1) + \gamma_1 \log(|\varOmega| - 1) \\
&\leq 2\sqrt{\gamma_1} + \gamma_1 \log(|\varOmega| - 1) \leq \tfrac{\delta^2}{16|\M|}+\tfrac{\delta^2}{16|\M|}  = \tfrac{\delta^2}{8|\M|}.
\end{align*}

Therefore we have
\[
I(H_t; \omega_0 \mid S,R,H_1,...,H_{t-1}) \leq \tfrac{\delta^2}{8|\M|}, \forall t \in [l-1].
\]
Then we get
    $ I(S'; \omega_0 \mid S) \leq \delta^2/8$.

By Lemma \ref{lem:ainfomono}, we know that $\EX_{s'}[\S'] \subseteq \EX^{\delta}_s[\S]$. By Claim \ref{clm:maxexplore_nocc}, we know that phase $l$ will explore menu $m$ at least $B_m(\gamma_0)$ times.

When $l > |\M|$, we use the same argument as the last paragraph of the proof of Lemma \ref{lem:exp_public}.
\end{proof}

\OMIT{
\begin{corollary}[Restatement of Theorem \ref{thm:private_nocc}]
\label{cor:private_nocc}
For any $\delta > 0$, we have a $\delta$-BIC recommendation policy of $T$ rounds with expected total reward at least $\left(T - C\cdot \log(T) \right) \cdot \OPT$ for some constant $C$ which does not depend on $T$.
\end{corollary}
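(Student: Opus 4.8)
The plan is to mirror the proof of Theorem~\ref{thm:public}, leaning on the two facts already established: Algorithm~\ref{alg:nocc_main} is $\delta$-BIC (Claim~\ref{clm:nocc_BIC}), and after $|\M|$ exploration phases it has explored every eventually-explorable menu (Lemma~\ref{lem:exp_nocc}). The remaining work is to convert ``all explorable menus explored'' into the reward guarantee and to account for the lengths of the two regimes. First I would bound the exploration length: each phase has exactly $L$ rounds, and since the number of distinct signals is capped at $|\varOmega|$ (the reset step in Algorithm~\ref{alg:nocc_main}), the probabilities $\Pr[\pi^{max}(s)=m]$ range over a finite, instance-dependent set bounded away from $0$. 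Hence $L = O(B_m(\gamma_0))$, and because $\gamma_0 \le \gamma_2 = \tfrac{1}{T|\M|}$, Lemma~\ref{lem:deltam} gives $B_m(\gamma_0) = O(\ln(1/\gamma_0)) = O(\log T)$. With $|\M|$ phases, exploration occupies $|\M|\cdot L = O(|\M|\log T)$ rounds, each contributing at most $\OPT$ to the loss against the benchmark $T\cdot\OPT$.

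Next I would handle exploitation. By Lemma~\ref{lem:exp_nocc} every eventually-explorable menu has been played at least $B_m(\gamma_0) \ge B_m(\gamma_2)$ times during exploration, so at the first exploitation round the algorithm has enough samples to form each estimate $\Delta_m$ at accuracy $\gamma_2$. By Lemma~\ref{lem:deltam} and a union bound over the at most $|\M|$ explored menus, $\Pr[\exists m:\ \Delta_m \ne D_m(\omega_0)] \le |\M|\,\gamma_2 = 1/T$. On the complementary ``good'' event the principal knows $D_m(\omega_0)$ for every explorable menu, can therefore evaluate $\sum_\theta \Pr[\theta]\,u(\theta,m(\theta),\omega_0)$ for each explorable menu, and recommends the maximizer; this is exactly the realized-state term $\max_{m\in\MExp_{\omega_0}} \sum_\theta \Pr[\theta]\,u(\theta,m(\theta),\omega_0)$ of $\OPTpri$, so taking expectation over $\omega_0$ the per-round reward on good exploitation rounds equals $\OPT$.

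The accounting then yields the theorem. Exploration costs at most $|\M|L\cdot\OPT = O(|\M|\log T)\cdot\OPT$. The ``bad'' exploitation event has probability at most $1/T$ and can cost at most $\OPT$ per round over at most $T$ rounds, contributing at most $\tfrac1T\cdot T\cdot\OPT = \OPT$ in expectation; the ``good'' rounds lose nothing. Summing, the expected total reward is at least $\bigl(T - |\M|L - 1\bigr)\OPT = (T - C\log T)\,\OPT$ for a constant $C$ depending on the instance and on $\delta$ but not on $T$; for $T$ so small that exploitation is never reached, the bound is vacuous since rewards are nonnegative.

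The genuinely hard step---propagating explorability across phases through an \emph{approximate} information-monotonicity argument---is already discharged by Lemma~\ref{lem:exp_nocc}, so the main residual obstacle is twofold. First, I must confirm that the exploitation rule (the best menu conditioned on $S_l$ and the prior) is itself $\delta$-BIC and, once the state is identified, coincides with the best \emph{explorable} menu: the additive-over-types structure makes the maximizer decompose type by type, but one must check the recommended menu lies in $\MExp_{\omega_0}$ and that the residual $O(1/T)$ estimation error is small enough to preserve the incentive constraint---which is precisely why $\gamma_2$ is taken as small as $1/(T|\M|)$. Second, this same choice of $\gamma_2$ is the sole source of the $\log T$ factor: the exploitation accuracy needed to keep the expected exploitation loss constant forces $B_m(\gamma_0)=\Theta(\log T)$ samples per explorable menu, and hence phases of length $\Theta(\log T)$.
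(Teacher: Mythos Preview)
Your proposal is correct and follows essentially the same route as the paper's proof: invoke Claim~\ref{clm:nocc_BIC} for $\delta$-BIC, Lemma~\ref{lem:exp_nocc} for completeness of exploration after $|\M|$ phases, bound the exploration length by $|\M|\cdot L = O(\log T)$ via $\gamma_0 \le \gamma_2 = 1/(T|\M|)$, and use Lemma~\ref{lem:deltam} with a union bound to get the $(1-1/T)$ good-event per-round payoff of $\OPT$ in exploitation. Your residual concerns (that the exploitation menu is $\delta$-BIC and that its per-type actions dominate $m^*(\theta)$ for each $\theta$) are points the paper glosses over, but your identification of them is accurate and they do not require any new ideas beyond what you sketch.
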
} 

\begin{proof}[Proof of Theorem \ref{thm:private_nocc}]
By Claim \ref{clm:nocc_BIC}, Algorithm \ref{alg:nocc_main} is $\delta$-BIC.

By Lemma \ref{lem:exp_nocc}, for each state $\omega$, Algorithm \ref{alg:nocc_main} explores all the eventually-explorable menus (i.e. $\MExp_{\omega}$) by the end of $|\M|$ phases.

After that, by Lemma \ref{lem:deltam} and $\gamma_2 = \frac{1}{T|\M|}$, for a fixed state $\omega$, we know that with probability $1- 1/T$, $\delta_m = D_m$ for all $m \in \MExp_{\omega}$. In this case, the agent of type $\theta$ gets expected reward at least $u(\theta,m^*(\theta),\omega)$ where menu $m^* =\arg\max_{m \in \MExp_{\omega}} \sum_{\theta \in \varTheta} \Pr[\theta] \cdot u(\theta, m(\theta), \omega)$. Taking average over types, the expected reward per round should be at least $(1-1/T) \cdot \max_{m \in \MExp_{\omega}} \sum_{\theta \in \varTheta} \Pr[\theta] \cdot u(\theta, m(\theta), \omega)$.

The expected number of rounds of the first $|\M|$ phases is $|\M| \cdot L = O(\ln(T))$. Therefore, Algorithm \ref{alg:nocc_main} has expected total reward at least $T \cdot \OPT- T \cdot (1/T) - O(\ln(T)) = T\cdot \OPT - O(\ln(T))$.
\end{proof}

\appendix
\section{Basics of Information Theory}
\label{app:info-theory}

We briefly review some standard facts and definitions from information theory which are used in proofs. For a more detailed introduction, see \cite{CK11}. Throughout, $X,Y,Z,W$ are random variables that take values in an arbitrary domain (not necessarily $\R$).

\xhdr{Entropy.}
The fundamental notion is \emph{entropy} of a random variable. In particular, if $X$ has finite support, its entropy is defined as
\[ H(X) = \textstyle - \sum_{x} p(x)\cdot  \log p(x),
\quad\text{where } p(x) = \Pr[X = x]. \]
(Throughout this paper, we use $\log$ to refer to the base $2$ logarithm and use $\ln$ to refer to the natural logarithm.) If $X$ is drawn from Bernoulli distribution with $\E[X]=p$, then
    \[ H(p) = -(p\log p + (1-p)(\log(1-p)). \]

The conditional entropy of $X$ given event $E$ is the entropy of the conditional distribution $(X|E)$:
\[ H(X|E) = \textstyle - \sum_{x} p(x)\cdot  \log p(x),
\quad\text{where } p(x) = \Pr[X = x | E]. \]

The \emph{conditional entropy} of $X$ given $Y$ is
\[ H(X|Y)
    := \textstyle \E_y[H(X|Y = y)]
    = \textstyle \sum_{y} \Pr[Y=y]\cdot H(X|Y = y). \]
Note that $H(X|Y) = H(X)$ if $X$ and $Y$ are independent.

We are sometimes interested in the entropy of a tuple of random variables, such as $(X,Y,Z)$. To simplify notation, we write $H(X,Y,Z)$ instead of $H((X,Y,Z))$, and similarly in other information-theoretic notation. Now, we formulate the \emph{Chain Rule} for entropy:
\begin{align}\label{app:info-entropy-chain-rule}
 H(X,Y) = H(X) + H(Y|X).
\end{align}

We also use the following fundamental fact about entropy:

\begin{lemma}[Fano's Inequality]
Let $X,Y,\hat{X}$ be random variables such that $\hat{X}$ is a deterministic function of $Y$.%
\footnote{Informally, $\hat{X}$ is an approximate version of $X$ derived from signal $Y$.} Let $E = \{ \hat{X} \neq X \}$ be the ``error event". Then, letting
$\X$ denote the support set of $X$,
    \[ H(X|Y) \leq H(E) + \Pr[E] \cdot (\log(|\X|-1), \]
\end{lemma}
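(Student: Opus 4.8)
The plan is to prove Fano's inequality by the standard ``introduce the error indicator'' argument, using only the chain rule \eqref{app:info-entropy-chain-rule} for entropy together with the fact that conditioning cannot increase entropy. First I would reduce the quantity of interest from $Y$ to $\hat{X}$. Since $\hat{X}$ is a deterministic function of $Y$, we have $H(X\mid Y) = H(X\mid Y,\hat{X}) \le H(X\mid \hat{X})$, where the equality holds because $\hat{X}$ is determined by $Y$ and the inequality is monotonicity under conditioning. Thus it suffices to bound $H(X\mid\hat{X})$, and from this point on $Y$ plays no further role.

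Next I would bring in the binary random variable $E$ indicating the error event $\{\hat{X}\ne X\}$, and compute the joint conditional entropy $H(X,E\mid\hat{X})$ in two different ways via the chain rule:
\[ H(E\mid\hat{X}) + H(X\mid E,\hat{X}) \;=\; H(X,E\mid\hat{X}) \;=\; H(X\mid\hat{X}) + H(E\mid X,\hat{X}). \]
The term $H(E\mid X,\hat{X})$ vanishes, because $E$ is a deterministic function of the pair $(X,\hat{X})$. Rearranging then gives the identity $H(X\mid\hat{X}) = H(E\mid\hat{X}) + H(X\mid E,\hat{X})$.

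It remains to bound the two surviving terms. For the first, conditioning reduces entropy, so $H(E\mid\hat{X}) \le H(E)$. For the second I would split on the value of the binary variable $E$:
\[ H(X\mid E,\hat{X}) = \Pr[E=0]\cdot H(X\mid E=0,\hat{X}) + \Pr[E=1]\cdot H(X\mid E=1,\hat{X}). \]
When $E=0$ we have $X=\hat{X}$, so the conditional distribution of $X$ is a point mass and contributes entropy $0$. When $E=1$ we have $X\ne\hat{X}$, so given $\hat{X}$ the variable $X$ ranges over at most $|\X|-1$ values, whence $H(X\mid E=1,\hat{X}) \le \log(|\X|-1)$. Combining these with $\Pr[E=1]=\Pr[E]$ yields $H(X\mid\hat{X}) \le H(E) + \Pr[E]\cdot\log(|\X|-1)$, and chaining with the first-step reduction $H(X\mid Y)\le H(X\mid\hat{X})$ gives the claim.

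The only genuinely clever step is the two-way expansion of $H(X,E\mid\hat{X})$: the argument hinges on recognizing that appending the deterministic error indicator $E$ costs nothing on one side (it is determined by $X$ and $\hat{X}$) while on the other side it isolates the ``error'' regime, in which $X$ is confined to $|\X|-1$ possibilities. Everything else is a routine application of the chain rule and of the monotonicity of entropy under conditioning, so I do not anticipate any further obstacle.
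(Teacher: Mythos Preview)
Your argument is the standard textbook proof of Fano's inequality and is correct. The paper does not actually prove this lemma: it is stated in the appendix as a ``fundamental fact about entropy'' with a citation to \cite{CK11}, so there is nothing to compare against beyond noting that your derivation matches the classical one.
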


\xhdr{Mutual info.}
The \emph{mutual information} between $X$ and $Y$ is
\[ I(X;Y) := H(X) - H(X|Y) = H(Y) - H(Y|X).\]
The \emph{conditional mutual information} between $X$ and $Y$ given $Z$ is
\[ I(X;Y|Z) := H(X|Z) - H(X|Y,Z) = H(Y|Z) - H(Y|X,Z).\]
Note that $I(X;Y|Z) = I(X;Y)$ if $X,Z$ are conditionally independent given $Y$, and $Y,Z$ are conditionally independent given $X$.

Important properties of conditional mutual information are:
\begin{align}
I(X,Y;Z|W) &= I(X;Z|W) + I(Y;Z|W,X) \\
I(X;Y|Z) &\geq I(X;Y|Z,W) \qquad\text{if $I(Y;W|X,Z) = 0$} \\
I(X;Y|Z) &\leq I(X;Y|Z,W) \qquad\text{if $I(Y;W|Z) = 0$}
\end{align}

\xhdr{KL-divergence.}
The \emph{Kullback-Leibler divergence} (a.k.a., \emph{KL-divergence}) between random variables $X$ and $Y$ is defined as
\[ \DKL(X\| Y) = \sum_x \Pr[X = x]
    \cdot \log\left( \frac{\Pr[X = x]}{\Pr[Y = x]} \right) .\]
Note that the definition is not symmetric, in the sense that in general
    $\DKL(X\| Y)\neq \DKL(Y\| X)$.
KL-divergence can be related to conditional mutual information as follows:
\begin{align}
&I(X;Y|Z)
    = \mathbb{E}_{x,z}\left[ \; \DKL((Y|X = x, Z=z)\|(Y|Z=z)) \; \right] \nonumber \\
&\;= \sum_{x,z} \Pr[X=x,Z=z]\;\ \DKL((Y|X = x, Z=z)\|(Y|Z=z)).
\end{align}
Here $(Y|E)$ denotes the conditional distribution of $Y$ given event $E$.

We also use \emph{Pinsker Inequality}:
\begin{align}
\textstyle \sum_x | \Pr[X=x] - \Pr[Y=x]| \leq \sqrt{2 \ln(2)\, \DKL(X\|Y)}.
\end{align}

\newpage
\bibliographystyle{ACM-Reference-Format}
\bibliography{references,bib-abbrv,bib-slivkins,bib-bandits,bib-AGT,bib-ML}

\end{document}